\documentclass[12pt]{article}
\usepackage{graphicx}
\usepackage{natbib} 
\usepackage{url} 

\newcommand{\blind}{0}

\usepackage{authblk}
\usepackage{amsmath}
\usepackage{amssymb}
\usepackage{tikz}
\usetikzlibrary{arrows}
\usepackage{float} 
\usepackage{subcaption}
\usepackage[toc,page]{appendix}

\usepackage{booktabs}
\usepackage{algorithm}
\usepackage{algpseudocode}
\usepackage{bbm}
\usepackage{adjustbox}
\newcommand{\bi}{\begin{itemize}}
\newcommand{\ei}{\end{itemize}}

\usepackage[shortlabels]{enumitem}
\usepackage{enumitem}

\newcommand{\E}{\mathbb{E}}

\newcommand{\T}{\intercal}

\newcommand{\bX}{\mathbf{X}}
\newcommand{\bx}{\mathbf{x}}
\newcommand{\bS}{\mathbf{S}}
\newcommand{\ba}{\mathbf{a}}
\newcommand{\bA}{\mathbf{A}}
\newcommand{\btau}{\pmb{\tau}}
\newcommand{\bY}{\mathbf{Y}}

\newcommand{\bH}{\mathbf{H}}

\newcommand{\bV}{\mathbf{V}}
\newcommand{\btheta}{\pmb{\theta}}

\newcommand{\bbeta}{\pmb{\beta}}
\newcommand{\bZ}{\pmb{Z}}

\newcommand{\bmu}{\pmb{\mu}}

\newcommand{\bQ}{\mathbf{Q}}
\newcommand{\bU}{\mathbf{U}}
\newcommand{\bI}{\mathbf{I}}

\newcommand{\bD}{\mathbf{D}}
\newcommand{\bSigma}{\pmb{\Sigma}}
\newcommand{\bOmega}{\pmb{\Omega}}
\newcommand{\bLambda}{\pmb{\Lambda}}
\newcommand{\bgamma}{\pmb{\gamma}}
\newcommand{\balpha}{\pmb{\alpha}}
\newcommand{\bUpsilon}{\pmb{\Upsilon}}

\newcommand{\bGamma}{\pmb{\Gamma}}
\newcommand{\bC}{\mathbf{C}}

\newcommand{\bCalO}{\pmb{\mathcal{O}}}

\newcommand{\rtn}{\sqrt{n}}
\newcommand{\pn}{\mathbb{P}_n}
\newcommand{\bPsi}{\pmb{\Psi}}
\newcommand{\bpsi}{\pmb{\psi}}
\newcommand{\tr}{\operatorname{tr}}

\newtheorem{thm}{\bf Theorem}[section]
\newtheorem{prop}{\bf Proposition}[section]
\newtheorem{lemma}{\bf Lemma}[section]

\newenvironment{proof}{\paragraph{Proof:}}{\hfill$\square$}
\usepackage{caption}
\makeatletter
\g@addto@macro\normalsize{%
  \setlength\abovedisplayskip{3pt}
  \setlength\belowdisplayskip{3pt}
  \setlength\abovedisplayshortskip{3pt}
  \setlength\belowdisplayshortskip{3pt}
}
\makeatother

\begin{document}

\def\spacingset#1{\renewcommand{\baselinestretch}%
{#1}\small\normalsize} \spacingset{1}


\if0\blind
{
  \title{\bf Optimal Data Integration and Adaptive Sampling for Efficient Treatment\\ Effect Estimation}
  \small
   \author[1]{Yen-Chun Liu}
    \author[1]{Alexander Volfovsky}
     \author[2]{\authorcr German Schnaidt}
     \author[2]{Cristobal Garib}
     \author[1]{Eric Laber}
    \affil[1]{Department of Statistical Science, Duke University}
     \affil[2]{Amazon Ads ECON}
  \maketitle
} \fi

\if1\blind
{
  \bigskip
  \bigskip
  \bigskip
  \begin{center}
    {\LARGE\bf }
\end{center}

  \medskip
} \fi

\bigskip
\begin{abstract}
This study addresses the challenge of estimating average treatment effects (ATEs) for advertising campaigns in online marketplaces where complete randomized experimentation is infeasible. We propose two key innovations: (1) a shrinkage estimator that optimally combines observational and experimental data without assuming smooth treatment effects across campaigns, and (2) a Bayesian adaptive experimental design framework that efficiently selects campaigns for randomized evaluation that minimizes cumulative risk. Our shrinkage estimator achieves lower risk compared to existing methods by balancing bias-variance tradeoffs, while our adaptive design significantly reduces the costs of campaign randomization. We establish theoretical guarantees including asymptotic normality and regret bounds. In an application to Amazon Ads data analyzing 2,583 campaigns, our approach achieves equivalent estimation precision while requiring only half of the randomized experiments needed by random sampling, the standard method widely used in practice today. The proposed method serves as a practical solution for marketplace platforms to efficiently measure advertising effectiveness while managing experimentation costs.
\end{abstract}

\noindent%
{\it Keywords:} Average treatment effects (ATE), Adaptive sampling, Online advertising, Shrinkage estimator
\vfill

\newpage
\spacingset{1.7} 
\section{Introduction}\label{sec:intro}

Online marketplaces such as Amazon, AliExpress, eBay, and Etsy allow sellers to compete for advertising space to increase product exposure and sales. With online advertising comprising a substantial portion of sellers' marketing budgets, understanding campaign effectiveness is critical. While the 
gold standard for estimating the ATE is a randomized experiment, running such experiments
for each campaign is expensive, inefficient, and may harm the customer experience 
\citep[][]{lewis2015unfavorable, gui2020combining}.  
Conversely, large observational data are collected
as a matter of course, but they may be subject to unmeasured confounding and lead leading to biased estimation.  
Thus, online marketplace advertising services
are looking for ways of combining large observational data with experimental
data collected through a small number of judiciously chosen randomized studies.  

We address both the problem of \textbf{how to fuse observational and experimental data} \textit{and} \textbf{how to select which future campaigns to evaluate via randomized experiments}.
For the first task, we propose a shrinkage estimator of the ATE across all campaigns 
that is valid even
when randomized study data are available for only
a subset of the campaigns. We combine a regression-based de-biasing approach and shrinkage estimators to minimize risk in estimating ATEs, and derive an expression for the risk under weighted squared error loss. For the second challenge, we propose a Bayesian adaptive experimental design framework that sequentially selects campaigns for randomization. Instead of relying on arbitrary rules or simple random sampling, our approach uses Thompson sampling to balance exploration and exploitation while systematically minimizing cumulative risk. 

Efficiently combining observational and RCT data is crucial for studying causal effects 'in the wild' \citep{colnet2020causal, degtiar2021review}. When both data sources are available, RCT data can be used for de-biasing observational estimates \citep{kallus2018removing, yang2020improved}. However, this involves a bias-variance trade-off: observational data have larger sample sizes but are potentially biased, while RCT data provide unbiased estimates with higher variance. Shrinkage estimators optimize this trade-off by combining these sources to minimize overall risk, such as the weighted squared error of estimated ATEs \citep{chen2015data, fourdrinier2018shrinkage, rosenman2020combining}. Additionally, recent work in causal inference has explored adaptive design under limited experimental resources, such as sequential optimization for intervention effects \citep{dawson2008sequential,zhang2023active} and active learning for causal discovery \citep{toth2022active}. While existing designs typically focus on a single data source, our work is the first to develop adaptive experimental design methods that leverage both observational and RCT data to minimize estimation risk.

Our contributions are threefold. First, we address the practical setting of multiple concurrent interventions with spillover effects, which is common in online marketplaces where advertisers compete for space and customers see multiple advertisements. Second, we develop a unified framework that jointly considers optimal data fusion and adaptive experimental design. Finally, we establish theoretical guarantees for both components of our approach.  In Section~\ref{sec:setup}, we set notation and introduce our de-biased regression model. Section~\ref{sec:method} presents our proposed methods. Section~\ref{sec:simu} shows simulation results and Section~\ref{sec:da} illustrates the Amazon Ads application. Concluding remarks are given in Section~\ref{sec:discussion}.

\section{Background}\label{sec:setup}

\subsection{Notation and setup}\label{sec:notation}
We assume that there are $J$ binary interventions to be evaluated concurrently in each
of $m=1,\ldots, M$ rounds.  In an online marketing application, the interventions might
be the advertising campaigns for $J$ brands while the rounds are days or weeks.  
During each round, 
interventions can be independently turned `on' or `off' for each unit (customer, patient,
etc.). 
In round $m$, one source of information about the effectiveness of interventions is 
a large observational study 
\begin{equation*}
    \mathcal{O}_{m} = \left\lbrace 
    (\bX_{i,m}, \bA_{i,m}, Y_{i, m})
    \right\rbrace_{i=1}^{N_m},
\end{equation*}
which comprises $N_m$ independent copies of 
$(\bX, \bA, Y)$, where: $\bX\in\mathcal{X}\subseteq \mathbb{R}^{p_x}$ is contextual
information, e.g., customer and campaign information; 
$\bA=(A^1,...,A^J) \in \mathcal{A} = \left\lbrace 0, 1\right\rbrace^J$ is intervention status 
with $A^j$ equal to one 
if intervention
$j$ is on and zero otherwise; and $Y\in\mathcal{Y}$ is an outcome of interest. 
To account for unobserved confounding, denoted as $\bU \in\mathcal{U}$, we 
assume intervention $\bA$ is determined by a function $f(\bX,\bU)$, where $f:\mathcal{X}\times \mathcal{U}\rightarrow \mathcal{A}$ is
an unknown mapping that needs to be estimated. Note that here we assume that the interaction between treatment assignments is fully determined by contextual information and unobserved confounders.

In addition to the observational study, we assume that in each round $m$, we can select a subset of the interventions,  
$\mathbf{S}_m \subseteq \left\lbrace 1,\ldots, J\right\rbrace$, to evaluate via auxiliary 
randomized experiments. Suppose randomized data in round $m$ are of the form 
\begin{equation*}
    \mathcal{R}_m(\bS_m) = \left\lbrace
    (\bX_{\ell,m}, W_{\ell,m}, \bA_{\ell,m}, Y_{\ell,m})
    \right\rbrace_{\ell=1}^{L_m},
\end{equation*}
which comprises $L_m$ independent copies of $(\bX, W, \bA, Y)$, where $W_{\ell,m}\in\{1,...,J\}$ is a single intervention uniformly drawn from $\bS_m$ to be randomized in copy $\ell$. The context 
$\bX$, the outcome $Y$, and the unobserved
confounder $\bU$
are distributed 
as in the observational data. The $j-$th intervention $A^j$ is constructed from $W$
as follows: 
\begin{equation*}
    \left\{
    \begin{array}{lc}
      A^j \sim \operatorname{Bernoulli}(0.5)   & \text{ if } W=j, \\
       A^j = f^j(\bX,\bU) & \text{otherwise},
    \end{array}\right.
\end{equation*}
where $f^j(\bX,\bU)$ is the $j-$th component of $f(\bX,\bU).$ Thus, for each copy $\ell$ in each round $m$, we uniformly select one intervention from $\bS_m$ to evaluate in randomized experiments,  while generating all other interventions as the same process as in the observational study.

Before characterizing intervention effects, we first introduce the concept of a behavior policy.
Let $\mathcal{P}(\mathcal{A})$ denote the space of probability distributions over $\mathcal{A}$, and define a {\em behavior 
policy} as a map $\mu:\mathcal{X}\times \mathcal{U}\rightarrow \mathcal{P}(\mathcal{A})$. In other words, a behavior policy is a probability distribution over the space of intervention status conditional on inputs. Given context $\bx$ and confounder $\mathbf{u}$, under 
policy $\mu$, an intervention $\ba\in\mathcal{A}$ is selected with probability $\mu(\bx, \mathbf{u})(\ba)$. Let $Y^*(\ba)$ be the potential outcome under
$\ba$ and let $\mathcal{Y}^* = \left\lbrace Y^*(\ba)\,:\, \ba \in\mathcal{A}\right\rbrace$ be
the collection of all potential outcomes. For a policy $\mu$, we define a collection of mutually 
independent random variables $\left\lbrace \bA_\mu(\bx, \mathbf{u})\,:\, (\bx,\mathbf{u})\in\mathcal{X}\times\mathcal{U}\right\rbrace$, independent of $\left\lbrace \bX, \bU, \bA,  
\mathcal{Y}^*\right\rbrace$, such that 
$P\left\lbrace \bA_\mu(\bx, \mathbf{u}) = \ba \right\rbrace = \mu(\bx, \mathbf{u})(\ba)$
for all $(\bx, \mathbf{u}, \ba)$ .
The potential
outcome under a behavior policy $\mu$ is thus defined as
\begin{equation*}
    Y^*(\mu) = \sum_{\ba\in\mathcal{A}}Y^*(\ba)1_{\ba = \bA_\mu(\bX, \bU)}.
\end{equation*}
For each intervention $j$, 
define 
$\mu_{0}^j:\mathcal{X}\times \mathcal{U}\rightarrow \mathcal{P}(\mathcal{A})$ as the behavior policy that turns intervention $j$ off while ensuring other interventions are generated as the same process as in the observational study. That is, we define the $k-$th component of the mapping as
$\mu_{0,k}^j(\bx, \mathbf{u}) = f^k(\bx, \mathbf{u})1_{k \ne j}$, $k=1,\ldots,J$. Similarly, 
define $\mu_1^j$ to be the behavior policy such that 
$\mu_{1,k}^j(\bx, \mathbf{u}) = f^k(\bx, \mathbf{u})1_{k\ne j} + 1_{k = j}$, so that the $j$-th intervention is on while the others are generated as in the observational study. The
average treatment effect in the wild (ATE-ITW) of intervention $j$ is then defined as 
\begin{equation*}
    \tau^{j*} \triangleq 
    \mathbb{E}
   \left\{ Y^*(\mu_1^j)\right\}
    - \mathbb{E}\left\{
    Y^*(\mu_0^j)\right\}.
\end{equation*}


In the terminology of dynamic treatment
regimes, $\tau^{j*}$ might be termed the {\em blip-to-reference} with business-as-usual
as the reference \citep[][]{moodie2007demystifying}.    
Let $\btau^* \triangleq (\tau^{1*},\ldots, \tau^{J*})$. Given any
estimator $\widehat{\btau}$ of $\btau^*$, we consider weighted squared error loss
\begin{equation*}
    \mathfrak{L}_{\bD}(\widehat{\btau}, \btau^*) \triangleq
    \left(
    \widehat{\btau} - \btau^*
    \right)^{\T}\bD\left(
    \widehat{\btau} - \btau^*
    \right),
\end{equation*}
 where $\bD \in \mathbb{R}^{J\times J}$ is a symmetric positive
 definite weight matrix. We aim to 1) construct an estimator 
$\widehat{\btau}$
of $\btau^*$ minimizing expected loss (i.e., risk), and 2) efficiently select the interventions  
to evaluate via randomized study in each round so 
as to minimize cumulative risk.

\subsection{Estimation of intervention effects}\label{sec:exist_estimator}
Recall that $W_{\ell,\nu}\in\{1,\ldots,J\}$ denotes the intervention that is randomized in copy $\ell$ in round $m$. Suppose for all $j \in \bigcup_{\nu = 1}^m \bS_\nu$, there exist some $1\leq \nu\leq m$ and $1\leq\ell\leq L_{\nu}$ such that $W_{\ell,\nu}=j$. That is, assume that each intervention available for randomization by round $m$ has been selected at least once. We define the RCT estimator of $\tau^{j*}$ as
 
\begin{align*}
    \widehat{\tau}_{\mathcal{R},m}^j& = 
 \frac{   \sum_{\nu=1}^m
    \sum_{\ell =1}^{L_{\nu}} 
    \mathbf{1}_{W_{\ell,\nu}= j}
    A_{\ell,\nu}^j Y_{\ell,\nu}}{\sum_{\nu=1}^m
    \sum_{\ell =1}^{L_{\nu}} 
    \mathbf{1}_{W_{\ell,\nu}= j}A_{\ell,\nu}^j} - 
    \frac{
    \sum_{\nu=1}^m
    \sum_{\ell=1}^{L_{\nu}} \mathbf{1}_{W_{\ell,\nu}= j}(1-A_{\ell,\nu}^j) Y_{\ell, \nu}
    }{  \sum_{\nu=1}^m\sum_{\ell =1}^{L_{\nu}} 
    \mathbf{1}_{W_{\ell,\nu}= j}(1-A_{\ell,\nu}^j)}.
\end{align*}
 
While the RCT estimator is unbiased, using only randomized data is suboptimal for two reasons. First, it overlooks the information in the observational study, which typically has a sample size orders of magnitude larger. Second, in online marketing applications, only a small fraction of the $J$ total interventions can be evaluated via RCT study. To overcome these limitations, we leverage observational data by constructing a doubly robust estimator (DR; \citet{funk2011doubly}) of each $\tau^{j*}$. Define $\overline{N}_m:=\sum_{\nu=1}^mN_\nu$ as the cumulative observational sample size. The DR estimator of $\tau^{j*}$ is defined as 
\begin{align*}
        \widehat{\tau}_{\mathcal{O},m}^j 
      &  = \frac{1}{\overline{N}_m}\sum_{\nu=1}^m
    \sum_{i=1}^{N_{\nu}}  \widehat{m}^j_1(\bX_{i,\nu}) + \frac{A^j_{i,\nu} \{Y_{i,\nu} - \widehat{m}^j_1(\bX_{i,\nu})\}}{\widehat{P}_m(A_{i,\nu}^j|\bX_{i,\nu})}\\
    &-\frac{1}{\overline{N}_m}\sum_{\nu=1}^m
    \sum_{i=1}^{N_{\nu}}  \widehat{m}^j_0(\bX_{i,\nu}) + \frac{(1-A^j_{i,\nu}) \{Y_{i,\nu} - \widehat{m}^j_0(\bX_{i,\nu})\}}{1-\widehat{P}_m(A_{i,\nu}^j|\bX_{i,\nu})},
\end{align*}
where $\widehat{P}_m(A_{i,\nu}^j|\bX_{i,\nu})$ is the estimated propensity score and $\widehat{m}^j_1(\bX_{i,\nu}) $ and $\widehat{m}^j_0(\bX_{i,\nu})$ are regression outcome estimates of expected potential outcomes in the treatment and control groups, respectively.

Due to  unmeasured confounding,  
$\widehat{\tau}_{\mathcal{O},m}^j$  might not be consistent for $\tau^{j*}$ \citep{vermeulen2015bias}. To de-bias 
these estimators, we assume each intervention is associated with a vector of attributes $\mathbf{V}^j \in\mathcal{V}$, 
$j=1,\ldots, J$; 
e.g., in the context of online marketing, these attributes might characterize
the nature of an advertising campaign including channel(s), brand reputation,
market share, and so on. We assume that 
$$
\mathbb{E}\left(\widehat{\tau}_{\mathcal{O}, m}^j \mid \mathbf{V}^j=\mathbf{v}^j\right)=\tau^{j *}+\psi\left(\mathbf{v}^j\right)^{\top} \boldsymbol{\theta}^*+\rho_m^j
$$

where $\psi: \mathcal{V} \rightarrow \mathbb{R}^{p_v}$ is a user-specified feature mapping, $\boldsymbol{\theta}^* \in \boldsymbol{\Theta} \subseteq \mathbb{R}^{p_v}$ is an unknown parameter, and $\rho_m^j$ is a remainder satisfying $\sup _j\left|\rho_m^j\right|=o\left\{\left(\sum_{\nu=1}^m N_\nu\right)^{-1 / 2}\right\}$.  
The proposed bias model has two important features.  
First, the remainder term $\rho_m^j$ relaxes the assumption of unbiasedness to consistency. Second, we assume smoothness only in the bias structure across $\mathbf{v}$, {\em not in the treatment effects} which reflects 
the practical reality that similar brands can have dramatically different campaign effectiveness \citep{kaptein2012heterogeneity}.

Let $\overline{\bS}_m:=\cup_{\nu=1}^m\bS_m$ and  suppose $|\overline{\bS}_m|\geq p_v$. Define the least squares estimator of $\boldsymbol{\theta}^*$ as
$$
\widehat{\boldsymbol{\theta}}_m\left(\overline{\mathbf{S}}_m\right) = \arg \min _{\boldsymbol{\theta}} \sum_{j \in \overline{S}_m}\left\{\widehat{\tau}_{\mathcal{O}, m}^j-\widehat{\tau}_{\mathcal{R}, m}^j-\psi\left(\mathbf{V}^j\right)^{\top} \boldsymbol{\theta}\right\}^2.
$$

Denote $\boldsymbol{\Psi}$ as the $p_v \times J$ matrix whose $j$-th column equals to $\psi\left(\mathbf{V}^j\right)$. Under relatively mild conditions given in the Appendix, it follows that $\widehat{\boldsymbol{\tau}}_{\mathcal{O}, m}-\boldsymbol{\Psi} \widehat{\boldsymbol{\theta}}_m-\boldsymbol{\tau}^*$ is asymptotically normal with mean zero as $m$ grows large. While $\widehat{\boldsymbol{\tau}}_{\mathcal{O}, m}-\boldsymbol{\Psi} \widehat{\boldsymbol{\theta}}_m$ consistently estimates $\boldsymbol{\tau}^*$, it has relatively higher variance compared to $\widehat{\boldsymbol{\tau}}_{\mathcal{O}, m}$, due to $\widehat{\boldsymbol{\theta}}_m$ being estimated using the 
(much smaller) randomized data. In the next section, we propose a shrinkage estimator that optimizes the bias-variance trade-off by minimizing weighted risk.
\section{Proposed methods}\label{sec:method}

\subsection{Optimal shrinkage estimator $\widehat{\btau}^\lambda_m$}\label{sec:opt_est}

We consider estimators of $\btau^*$ of the form
\begin{equation}\label{eq:shrink_class}
    \widetilde{\btau}_{m}^{\lambda}(\overline{\bS}_m) = 
    \widehat{\btau}_{\mathcal{O},m} - (1-\lambda) \pmb{\Psi}\widehat{\btheta}_{m}(\overline{\bS}_m), 
\end{equation}
where the dependence on $\overline{\bS}_m$, historical interventions evaluated in randomized studies, has been made explicit for convenience 
when we consider optimal design. When $\lambda=1$, $\widetilde{\btau}_{m}^{\lambda=0}(\overline{\bS}_m)$ reduces to the DR estimator $\widehat{\btau}_{\mathcal{O},m}$, which is efficient but can be seriously biased. Conversely, when $\lambda=0$, $\widetilde{\btau}_{m}^{\lambda=1}(\overline{\bS}_m)$ is equivalent to the fully-de-biased estimator, $\widehat{\btau}_{\mathcal{O},m}-\pmb{\Psi}\widehat{\btheta}_{m}(\overline{\bS}_m)$, which is consistent for the oracle treatment effects $\btau^*$ but prone to high variance. 
The objective is to tune $\lambda\in[0,1]$ to minimize the estimated risk when
$\overline{\bS}_m$ is fixed.

To derive an optimal shrinkage estimator $\widetilde{\btau}^\lambda_m$, we will 
first make use of the following 
Lemma derived from similar results in \citet{strawderman2003minimax} and
\citet{rosenman2020combining}  \citep[see also][]{fourdrinier2018shrinkage} for risk estimation.

\begin{lemma}
\label{lemma:risk}
Suppose that $\bZ \sim N_p(\pmb{\mu}, \pmb{\Sigma})$ and $\pmb{Y}$ is a 
random vector in $\mathbb{R}^q$. Define 
the weighted squared error loss 
\begin{equation*}
    \mathcal{L}_{\bD}(\pmb{\nu}, \bmu) = (\pmb{\nu} - \bmu)^{\T}\bD(\pmb{\nu} - \bmu),
\end{equation*}
where $\pmb{D}$ is a fixed positive definite matrix.  
Let $g: \mathbb{R}^p\times \mathbb{R}^q\rightarrow \mathbb{R}^p$ 
be differentiable and satisfy $\mathbb{E}||g(\bZ, \bY)||^2 < \infty$. 
Then the risk of $\kappa(\bZ, \bY):= \bZ + \pmb{\Sigma} g(\bZ, \bY)$ is 
given by 
\begin{align*}
    &\mathfrak{R}(\bD, \bSigma, g)
   =
    \frac{1}{p}\mathbb{E}\Bigg(
    \sum_{j=1}^p \lambda_j \left[\left\lbrace
    \bOmega\bSigma g(\bZ, \bY)
    \right\rbrace_j^2 
    + 2\frac{\partial \left\lbrace \bOmega\bSigma g(\bZ, \bY) \right\rbrace_j}{\partial (\bOmega \bZ)_j^2}
    \right]\Bigg) +  
    \frac{1}{p}\mathrm{tr}(\bLambda),
\end{align*}
where $\bOmega$, $\bLambda$, and $\lambda_j$ are given in the Appendix.
\end{lemma}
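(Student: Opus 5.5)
The plan is to reduce to the classical Stein unbiased risk estimation identity by a whitening and diagonalizing change of variables. First I pin down $\bOmega$, $\bLambda$, $\lambda_j$. Since $\bSigma$ is positive definite, I write $\bSigma^{1/2}\bD\bSigma^{1/2} = \bQ\bLambda\bQ^{\T}$ with $\bQ$ orthogonal and $\bLambda=\mathrm{diag}(\lambda_1,\ldots,\lambda_p)$. Then I set $\bOmega = \bQ^{\T}\bSigma^{-1/2}$. This gives $\bOmega\bSigma\bOmega^{\T} = \bI$ and $\bD = \bOmega^{\T}\bLambda\bOmega$. Consequently $\bOmega\bZ \sim N_p(\bOmega\bmu, \bI)$, and for any vector $\pmb{\nu}$,
\begin{equation*}
\mathcal{L}_{\bD}(\pmb{\nu},\bmu) = \sum_{j=1}^p \lambda_j \{(\bOmega\pmb{\nu})_j - (\bOmega\bmu)_j\}^2 .
\end{equation*}
The $1/p$ factor is just a per-coordinate normalization of the risk, so I carry it through unchanged.

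Next I expand the loss of $\kappa$. I write $\widetilde{\bZ} = \bOmega\bZ$, $\widetilde{\bmu} = \bOmega\bmu$ and $h(\bZ,\bY) = \bOmega\bSigma g(\bZ,\bY)$. Then
\begin{equation*}
\mathcal{L}_{\bD}(\kappa,\bmu) = \sum_j \lambda_j\left\{ (\widetilde{Z}_j-\widetilde{\mu}_j)^2 + h_j^2 + 2(\widetilde{Z}_j-\widetilde{\mu}_j)h_j \right\}.
\end{equation*}
The first term has expectation $\sum_j\lambda_j = \mathrm{tr}(\bLambda)$ because each $\widetilde{Z}_j$ has unit variance. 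The second term appears verbatim in the claimed formula. The finiteness of $\mathbb{E}\|g\|^2$ guarantees that all these expectations exist.

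The key step is the cross term, which I handle with Stein's lemma coordinatewise. I regard $h$ as a function of $\widetilde{\bZ}$ through $\bZ = \bOmega^{-1}\widetilde{\bZ}$. I condition on $\bY$ and on $\widetilde{Z}_{-j}$, and apply one-dimensional Stein's identity to $\widetilde{Z}_j \sim N(\widetilde{\mu}_j,1)$. This gives
\begin{equation*}
\mathbb{E}\{(\widetilde{Z}_j-\widetilde{\mu}_j)h_j\} = \mathbb{E}\{\partial h_j/\partial \widetilde{Z}_j\},
\end{equation*}
which is the derivative term in the statement (the paper's ``$\partial(\bOmega\bZ)_j^2$'' should be read as the derivative with respect to $(\bOmega\bZ)_j$). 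Summing the three pieces with weights $\lambda_j$ and dividing by $p$ gives the stated expression.

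The main obstacle is justifying Stein's identity when $\bY$ is an arbitrary random vector possibly dependent on $\bZ$. The conditioning argument requires that, given $\bY$ (and $\widetilde{Z}_{-j}$), $\widetilde{Z}_j$ is still $N(\widetilde{\mu}_j,1)$. I would therefore add the assumption that $\bY$ is independent of $\bZ$. This holds in the intended application, where $\bY$ collects estimates built from independent data; otherwise one would work with the conditional law. I would also require absolute continuity of $h_j$ in each coordinate and integrability of its partial derivatives, so that the integration by parts has no boundary terms. These are the standard conditions of Stein (1981) and \citet{fourdrinier2018shrinkage}, and I would cite them rather than reprove them.
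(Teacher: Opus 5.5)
Your proposal is correct and follows the paper's route: the same whitening-and-diagonalizing map $\bOmega$ reduces the problem to $\bLambda$-weighted loss with $\widetilde{\bZ}\sim N_p(\widetilde{\bmu},\bI)$ (the paper builds $\bOmega$ from a Cholesky factor and you from $\bSigma^{1/2}$, which makes no difference), after which the paper simply cites Theorem 1 of \citet{rosenman2020combining} where you carry out the coordinatewise Stein step yourself. Your added hypothesis that $\bY$ is independent of $\bZ$ is genuinely needed and is missing from the statement (with $\bY=\bZ$ and $g=\bSigma^{-1}(\bY-\bZ)$ the formula returns $-\tr(\bLambda)/p<0$), but your remark that it holds in the intended application is wrong: there $\bY=\widehat{\btau}_{\mathcal{O},m}$ and $\bZ=(\bI-\bH_m)\widehat{\btau}_{\mathcal{O},m}+\bH_m\widehat{\btau}_{\mathcal{R},m}$ are correlated, which is why the paper's eURE derivation evaluates the cross term directly as $\tr\{\bLambda\,\mathrm{Cov}(\widetilde{\bY}-\widetilde{\bZ},\widetilde{\bZ})\}$ rather than through the derivative with $\bY$ held fixed.
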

\noindent 
A proof is provided in the Appendix. Lemma~\ref{lemma:risk} provides a closed form of risk given a weight matrix $\bD$, the covariance of an estimator $\bZ$, and a shrinkage function $g(\cdot,\cdot)$. While the expression appears complex, the shrinkage estimator \eqref{eq:shrink_class} has the required form to apply Lemma \ref{lemma:risk}:
\begin{equation}\label{eq:shrinkage}
    \widetilde{\btau}_m^{\lambda}(\overline{\bS}_m) = 
\underbrace{\widehat{\btau}_{\mathcal{O},m} - 
\bPsi \widehat{\btheta}_m(\overline{\bS}_m)}_{\bZ} \nonumber+ \bSigma\underbrace{\left(\lambda\bSigma^{-1}\left[\widehat{\btau}_{\mathcal{O}, m} - \left\{\widehat{\btau}_{\mathcal{O},m} - 
\bPsi \widehat{\btheta}_m(\overline{\bS}_m)\right\}\right]\right)}_{g^{\lambda}(\bZ, \bY)}.
\end{equation} Here, $\bZ$ corresponds to the fully-debiased estimator $\widehat{\btau}_{\mathcal{O},m} - 
\bPsi \widehat{\btheta}_m(\overline{\bS}_m)$, $\bY$ is the DR estimator $\widehat{\btau}_{\mathcal{O},m}$, and $g^\lambda(\bZ,Y)=\lambda\bSigma^{-1}(\bY-\bZ)$ is the estimated bias vector shrunk by $\lambda$ and scaled by $\bSigma^{-1}$. While our proposed shrinkage estimator resembles the approach in \citet{rosenman2020combining}, a key difference is our use of the debiased estimator instead of direct RCT estimates as proxies for true treatment effects. This makes our method more general since it does not require RCT estimates for all interventions, which is common in online marketing experiments where complete randomized data is rarely available. Instead, we only require at least $p_v$ RCT estimates. This allows us to fuse observational and RCT data even when randomized data is incomplete. In the special case where only a single intervention is considered, our estimator in \eqref{eq:shrinkage} reduces to that in \citet{rosenman2020combining}.

We first establish the asymptotic normality of the fully de-biased estimator before proving the asymptotic normality of our proposed shrinkage estimator. Let $\bGamma$ be the
asymptotic variance of $\widehat{\btau}_{\mathcal{O},m}$ and let $\bUpsilon$ be the asymptotic variance of $\widehat{\btau}_{\mathcal{R},m}$. Define $\overline{L}_m=\sum_{\nu=1}^mL_{\nu}$ as the cumulative RCT sample size. The asymptotic behavior of the fully-debiased estimator $\bZ=\widehat{\btau}_{\mathcal{O},m} - 
\bPsi \widehat{\btheta}_m(\overline{\bS}_m)$ is summarized in Lemma~\ref{lemma: asyZ}.

\begin{lemma}\label{lemma: asyZ}
    Suppose that $\bZ$ is defined as in~(\ref{eq:shrinkage}) and assume that $|\overline{\bS}_m|\geq p_v$. Suppose $\lim_{m\to\infty}\overline{N}_m/(\overline{N}_m+\overline{L}_m)=\rho\in(0,1)$. Under regularity conditions (B1)-(B4) and (C1)-(C3), it can be shown that $\sqrt{\overline{N}_m^{-1}+\overline{L}_m^{-1}}^{-1}\bZ\overset{d}{\to} N_p(\btau^*,\bSigma)$, where
    \begin{equation}\label{eq:Sigma}
        \bSigma = (1-\rho)(\bI-\bH)\bGamma(\bI-\bH)^\T + \rho \bH \bUpsilon\bH^\T
    \end{equation}

with $\bH$ given in the Appendix. 
\end{lemma}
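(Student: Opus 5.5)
The plan is to write $\bZ$ as a linear map applied to the two primary estimators $\widehat{\btau}_{\mathcal{O},m}$ and $\widehat{\btau}_{\mathcal{R},m}$. Then I combine their joint asymptotic normality, which is independent across the two data sources, with a delta/Slutsky argument.

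Let $\bPsi_{\bS}$ denote the $p_v\times|\overline{\bS}_m|$ submatrix of $\bPsi$ with columns indexed by $\overline{\bS}_m$. Let $\bQ$ be the $|\overline{\bS}_m|\times J$ selection matrix. Under (B1)--(B4) I assume $\bPsi_{\bS}\bPsi_{\bS}^\T$ is invertible, which is possible since $|\overline{\bS}_m|\ge p_v$. The least squares estimator is then explicitly
\begin{equation*}
\widehat{\btheta}_m=(\bPsi_{\bS}\bPsi_{\bS}^\T)^{-1}\bPsi_{\bS}\bQ(\widehat{\btau}_{\mathcal{O},m}-\widehat{\btau}_{\mathcal{R},m}).
\end{equation*}
Setting $\bH:=\bPsi^\T(\bPsi_{\bS}\bPsi_{\bS}^\T)^{-1}\bPsi_{\bS}\bQ$, this gives the exact identity
\begin{equation*}
\bZ=(\bI-\bH)\widehat{\btau}_{\mathcal{O},m}+\bH\widehat{\btau}_{\mathcal{R},m}.
\end{equation*}
The same $\bH$ is the matrix referred to in the statement. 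A key algebraic check is that $(\bI-\bH)\bPsi^\T\btheta^*=\bPsi^\T\btheta^*-\bPsi^\T(\bPsi_{\bS}\bPsi_{\bS}^\T)^{-1}\bPsi_{\bS}\bPsi_{\bS}^\T\btheta^*=0$. So the bias $\bPsi^\T\btheta^*$ of the observational estimator is annihilated, while $(\bI-\bH)\btau^*+\bH\btau^*=\btau^*$. Centering therefore gives
\begin{equation*}
\bZ-\btau^*=(\bI-\bH)\bigl(\widehat{\btau}_{\mathcal{O},m}-\btau^*-\bPsi^\T\btheta^*\bigr)+\bH(\widehat{\btau}_{\mathcal{R},m}-\btau^*).
\end{equation*}
For rows outside $\overline{\bS}_m$, only the observational part and the $\widehat{\btheta}_m$ fluctuation enter, which the formula already handles.

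Next I invoke the marginal central limit theorems. Under (C1)--(C3), which I take to include consistency and convergence rates of the nuisance estimates $\widehat m_a^j,\widehat P_m$ and the remainder condition $\sup_j|\rho_m^j|=o(\overline{N}_m^{-1/2})$, the standard doubly robust argument gives
\begin{equation*}
\sqrt{\overline{N}_m}\bigl(\widehat{\btau}_{\mathcal{O},m}-\btau^*-\bPsi^\T\btheta^*\bigr)\to N(0,\bGamma).
\end{equation*}
For the RCT side, each stratum has size growing proportionally to $\overline{L}_m$, since $W$ is uniform on $\bS_\nu$. The randomization of $A^j$ then gives $\sqrt{\overline{L}_m}(\widehat{\btau}_{\mathcal{R},m}-\btau^*)\to N(0,\bUpsilon)$ on $\overline{\bS}_m$, with the ratio-of-means handled by the delta method. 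The two samples are independent, so the limits are jointly normal with block-diagonal covariance.

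Finally, let $r_m=(\overline{N}_m^{-1}+\overline{L}_m^{-1})^{-1/2}$. Then $r_m/\sqrt{\overline{N}_m}\to\sqrt{\overline{L}/(\overline{N}+\overline{L})}\to\sqrt{1-\rho}$ and $r_m/\sqrt{\overline{L}_m}\to\sqrt{\rho}$. Multiplying the centered identity by $r_m$ and applying Slutsky's theorem and the continuous mapping theorem yields the limit $N(0,(1-\rho)(\bI-\bH)\bGamma(\bI-\bH)^\T+\rho\bH\bUpsilon\bH^\T)$. This is the stated $\bSigma$, and the centering at $\btau^*$ should be read as $r_m(\bZ-\btau^*)$.

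The main obstacle I expect is the observational CLT. Cross-fitted or otherwise controlled nuisance estimation is needed so that the product of the propensity and outcome errors is $o_p(\overline{N}_m^{-1/2})$ uniformly in $j$. The pooling over rounds, with $\widehat{P}_m$ re-estimated each round, must also not break the i.i.d. structure. A secondary subtlety is that $\overline{\bS}_m$ is chosen adaptively. Here I would condition on the eventual set, assuming it stabilizes, so that $\bH$ is fixed, or else require $\bH$ to converge in probability and absorb the difference by Slutsky.
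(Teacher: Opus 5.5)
Your proposal is correct and follows essentially the same route as the paper's proof. Both use the exact identity $\bZ=(\bI-\bH)\widehat{\btau}_{\mathcal{O},m}+\bH\widehat{\btau}_{\mathcal{R},m}$, the marginal CLTs at rates $\sqrt{\overline{N}_m}$ and $\sqrt{\overline{L}_m}$, independence of the two samples, and rescaling with factors tending to $\sqrt{1-\rho}$ and $\sqrt{\rho}$; your $\overline{\bS}_m$-based $\bH$ becomes the paper's full projection once every intervention has been selected, and your check $(\bI-\bH)\bPsi^\T\btheta^*=0$ fills in a step the paper leaves implicit.

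The one soft spot is your claim that uniform $W$ within rounds makes each intervention's RCT sample grow proportionally to $\overline{L}_m$. This need not hold under adaptive selection, so the $\sqrt{\overline{L}_m}$-rate RCT CLT is really an assumption, though the paper's proof simply posits it as well.
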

Plugging $\bZ$, $g$, and $\bSigma$ specified in \eqref{eq:shrinkage} and \eqref{eq:Sigma} into ${\mathfrak{R}}(\bD, {\bSigma}, g^\lambda)$ yields the risk of $\widetilde{\btau}_m^\lambda(\overline{\bS}_m)$. Since $\bSigma$ and $\bGamma$ are unknown in practice, we replace them with their respective plug-in estimators, $\widehat{\bSigma}_m$ and $\widehat{\bGamma}_m$. By substituting these estimators and approximating the expectation with observed realizations, we derive an empirical unbiased risk estimator (eURE) of the shrinkage estimator $\widetilde{\btau}_m^{\lambda}(\overline{\bS}_m)$:
\begin{align*}
      \widehat{\mathfrak{R}}_m(\bD, \widehat{\bSigma}_m, g^\lambda)&=\tr(\bD\widehat{\bSigma}_m) +\lambda^2 \E \left\{\left(\bPsi\widehat{\btheta}_m\right)^\T\bD\bPsi\widehat{\btheta}_m\right\}-2\lambda\tr\left\{-\bD\widehat{\bGamma}_m\left(\mathbf{I}-\bH_m\right)^\T +\bD\widehat{\bSigma}_m\right\},  
\end{align*}
where a proof and closed forms for $\widehat{\bSigma}_m$ and $\bH_m$ are given in the Appendix. The optimal shrinkage parameter is thus defined as \[\widehat{\lambda}_m \in \arg\min_{\lambda}\widehat{\mathfrak{R}}_m\left\lbrace \bD, 
\widehat{\bSigma}_{m}, g^{\lambda}\right\rbrace.\] The following theorem summarizes the closed form of the optimal shrinkage factor and the corresponding eURE. The proof is provided in the Appendix.


\begin{thm}\label{thm:opt_shrink}
  Under weighted squared error loss with weights $\bD$, the optimal shrinkage factor is given by
    \begin{equation*}
        \widehat{\lambda}_{m} = \frac{\tr\left\{-\bD\widehat{\bGamma}_{m}\left(\mathbf{I}-\bH_m\right)^\T + \bD\widehat{\bSigma}_m\right\}}{(\Psi\widehat{\btheta}_m)^\T\bD\Psi\widehat{\btheta}_m},
    \end{equation*}
    and the corresponding eURE is
    \begin{align*}
        \operatorname{eURE}(\widehat{\btau}^\lambda_m,\btau^*) &= \tr(\bD\widehat{\bSigma}_m) - \frac{\tr\left\{-\bD\widehat{\bGamma}_{m}\left(\mathbf{I}-\bH_m\right)^\T + \bD\widehat{\bSigma}_m\right\}^2}{(\Psi\widehat{\btheta}_m)^\T\bD\Psi\widehat{\btheta}_m}.
    \end{align*}
    Closed forms of $\widehat{\bSigma}_m$ and $\bH_m$ are provided in the Appendix.
\end{thm}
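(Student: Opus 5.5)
The theorem follows by minimizing a quadratic in $\lambda$. The substantive work is justifying the eURE expression stated just before it. I will first derive that expression from Lemma~\ref{lemma:risk}, and then optimize over $\lambda$.

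\textbf{Step 1: risk of the shrinkage estimator.} Take $\bZ=\widehat{\btau}_{\mathcal{O},m}-\bPsi\widehat{\btheta}_m$ and $\bY=\widehat{\btau}_{\mathcal{O},m}$, and use the asymptotic normal approximation for $\bZ$ from Lemma~\ref{lemma: asyZ}, with covariance $\bSigma$ as in \eqref{eq:Sigma}. With $g^\lambda(\bZ,\bY)=\lambda\bSigma^{-1}(\bY-\bZ)$ we have $\bSigma g^\lambda=\lambda\bPsi\widehat{\btheta}_m$. Expanding the risk directly,
\begin{equation*}
\E\{(\bZ+\lambda(\bY-\bZ)-\btau^*)^\T\bD(\bZ+\lambda(\bY-\bZ)-\btau^*)\}=\tr(\bD\bSigma)+\lambda^2\E\{(\bY-\bZ)^\T\bD(\bY-\bZ)\}+2\lambda\E\{(\bY-\bZ)^\T\bD(\bZ-\btau^*)\}.
\end{equation*}
This is equivalent to the Stein-type form in Lemma~\ref{lemma:risk}: the divergence term there plays the role of the cross term here.

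\textbf{Step 2: the cross term, which I expect to be the main obstacle.} The key identity is $\E\{(\bY-\bZ)^\T\bD(\bZ-\btau^*)\}=\tr\{\bD\,\mathrm{Cov}(\bZ,\bY-\bZ)\}$, up to asymptotically negligible bias. Since $\mathrm{Cov}(\bZ,\bY-\bZ)=\mathrm{Cov}(\bZ,\bY)-\bSigma$, it remains to compute $\mathrm{Cov}(\bZ,\bY)$ using the linear representation behind Lemma~\ref{lemma: asyZ}:
\begin{equation*}
\bZ-\btau^*\approx(\bI-\bH)(\widehat{\btau}_{\mathcal{O},m}-\E\widehat{\btau}_{\mathcal{O},m})+\bH(\widehat{\btau}_{\mathcal{R},m}-\btau^*).
\end{equation*}
The observational and RCT samples are independent, so only the first part correlates with $\bY$. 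This gives $\mathrm{Cov}(\bZ,\bY)=(\bI-\bH)\bGamma$ on the appropriate scale, and hence
\begin{equation*}
\tr\{\bD\,\mathrm{Cov}(\bZ,\bY-\bZ)\}=\tr\{\bD(\bI-\bH)\bGamma\}-\tr(\bD\bSigma).
\end{equation*}
By symmetry of $\bD$ and transpose-invariance of the trace, $\tr\{\bD(\bI-\bH)\bGamma\}=\tr\{\bD\bGamma(\bI-\bH)^\T\}$. The cross term therefore equals $-2\lambda\tr\{-\bD\bGamma(\bI-\bH)^\T+\bD\bSigma\}$, which matches the stated eURE once $\bSigma$, $\bGamma$ and $\bH$ are replaced by the plug-ins $\widehat{\bSigma}_m$, $\widehat{\bGamma}_m$ and $\bH_m$. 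The care needed here is in matching the scalings ($\overline{N}_m^{-1}$ versus $\overline{N}_m^{-1}+\overline{L}_m^{-1}$, i.e.\ the factor $1-\rho$) so that the stated combination comes out exactly. I would absorb these scalings into the definitions of $\widehat{\bSigma}_m$ and $\widehat{\bGamma}_m$ given in the Appendix.

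\textbf{Step 3: the $\lambda^2$ term and the minimization.} In the $\lambda^2$ term, replace the expectation by its realization $(\bPsi\widehat{\btheta}_m)^\T\bD\bPsi\widehat{\btheta}_m$. Write $a=(\bPsi\widehat{\btheta}_m)^\T\bD\bPsi\widehat{\btheta}_m$ and $b=\tr\{-\bD\widehat{\bGamma}_m(\bI-\bH_m)^\T+\bD\widehat{\bSigma}_m\}$. The eURE is then $\tr(\bD\widehat{\bSigma}_m)+a\lambda^2-2b\lambda$. Because $\bD$ is positive definite, $a>0$ whenever $\bPsi\widehat{\btheta}_m\neq 0$, so the quadratic is strictly convex. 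Setting its derivative to zero gives $\widehat{\lambda}_m=b/a$, and substituting back gives $\tr(\bD\widehat{\bSigma}_m)-b^2/a$, as stated. If $\lambda$ is to be restricted to $[0,1]$, the constrained minimizer is the clipped value. I would note this but follow the statement, which gives the unconstrained minimizer.
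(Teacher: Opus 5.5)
Your proposal is correct, and your Step~3 is exactly the paper's proof of the theorem: strict convexity when $\bPsi\widehat{\btheta}_m\neq 0$, the first-order condition, then substitution.

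The difference lies in how you justify the eURE. The paper routes it through Lemma~\ref{lemma:risk}, in these steps:
\begin{enumerate}
\item It simultaneously diagonalizes with $\bOmega$, so that $\bOmega\bSigma\bOmega^\T=\bI$ and $\bD=\bOmega^\T\bLambda\bOmega$, and writes the risk in Stein form.
\item It invokes Stein's lemma to turn the divergence term back into $\tr[\bLambda\,\mathrm{Cov}(\widetilde{\bY}-\widetilde{\bZ},\widetilde{\bZ})]$.
\item It evaluates this covariance from $\widetilde{\bZ}=\bOmega(\bI-\bH_m)\widehat{\btau}_{\mathcal{O},m}+\bOmega\bH_m\widehat{\btau}_{\mathcal{R},m}$ and $\widetilde{\bY}=\bOmega\widehat{\btau}_{\mathcal{O},m}$.
\item It maps back to the original coordinates using $\tr(\bLambda)=\tr(\bD\bSigma)$.
\end{enumerate}

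You instead expand the loss directly and compute the cross term as $\tr\{\bD\,\mathrm{Cov}(\bZ,\bY-\bZ)\}$. This is the same quantity, obtained without the change of coordinates. Your route is shorter. It needs only first and second moments and (asymptotic) unbiasedness of $\bZ$, not normality.

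It also avoids a subtlety in the paper's route. The $\bLambda$-weighted divergence in Lemma~\ref{lemma:risk} is taken in $\bZ$ with $\bY$ held fixed. That is a valid Stein identity only when $\bY$ is independent of $\bZ$, and here they are correlated. Taken literally, the divergence of $\widetilde g=\lambda(\widetilde{\bY}-\widetilde{\bZ})$ yields only $-\lambda\tr(\bD\bSigma)$ and misses the $\tr\{\bD\bGamma(\bI-\bH_m)^\T\}$ term. The paper recovers that term only because it effectively falls back on the covariance calculation you do.

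The paper's framing does buy a link to the SURE machinery of \citet{rosenman2020combining}, which would matter for nonlinear shrinkage functions. For the linear $g^\lambda$ here, however, your direct computation is the cleaner argument.

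One small tightening: at finite $m$ the representation $\bZ=(\bI-\bH_m)\widehat{\btau}_{\mathcal{O},m}+\bH_m\widehat{\btau}_{\mathcal{R},m}$ is exact. Moreover, $\bH_m\bPsi\btheta^*=\bPsi\btheta^*$, so $\E\bZ=\btau^*$ up to asymptotically negligible terms. You can therefore work with $\bH_m$ throughout rather than approximating with $\bH$.
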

Theorem~\ref{thm:opt_shrink} extends the estimator of \citep{rosenman2020combining} by accounting for the correlation between the fully-de-biased estimator $\widehat{\btau}_{\mathcal{O},m} - 
\bPsi \widehat{\btheta}_m(\overline{\bS}_m)$ and the DR estimator $\widehat{\btau}_{\mathcal{O},m} $ that's induced by the fact that we cannot run randomized experiments for every intervention at every time period. 

\subsection{Bayesian Adaptive design}\label{sec:adaptive_ds}
We propose a Bayesian adaptive design framework for selecting interventions to be randomized in the RCT study.  The goal is to sequentially identify the set of interventions $\bS_{m}$ at each stage such that the cumulative risk is minimized. A key challenge is that the next-stage risk function requires the plug-in covariance estimates of the RCT estimates, $\widehat{\Upsilon}_{m+1}$, which cannot be directly estimated since future randomized data has not yet been observed. To address this challenge, we introduce a Bayesian structure on $\bUpsilon$, the asymptotic covariance of randomized treatment effects. We start with the case where $|\bS_m|=1$ and denote $R_{m+1}( k):=\E\left[\widehat{\mathfrak{R}}_{m+1}\mid S_{m+1} = k, \mathcal{H}_m\right]$ as the expected next-stage risk conditional on randomizing intervention $k$ at round $m+1$, where $\mathcal{H}_m$ denote the historical information up to round $m$.

Recall that $\Upsilon_{jj'}=0$ for all $j\neq j'$ as different interventions are estimated using independent data. We assume a Bayesian hierarchical structure on the diagonal elements of $\bUpsilon$:
\begin{align*}
\Upsilon^{jj} &\sim \text{InvGamma}(\alpha,\beta^j),\quad \beta^j \sim \text{Gamma}(\eta_0,\lambda_0),
\end{align*}
where $\alpha$, $\eta_0$, and $\lambda_0$ are user-specified hyperparameters. {In practice, the hyperparameters can be chosen to center the prior distribution around the estimated asymptotic variances of $\widehat{\tau}^{j}_{\mathcal{R}}$, where $j\in{\bS}_1$ are the interventions randomized in the first round, while accommodating uncertainty through a larger variance. Let $r_m^j = \sum_{\nu=1}^m L_{\nu}\mathbf{1}_{j\in\bS_\nu}$ be the total sample size of RCT studies in which intervention $j$ was randomized by the end of round $m$. A choice of hyperparameters is to set $\eta_0 /\lambda_0\approx(\alpha-1)|{\bS}_1|^{-1}\sum_{j\in{\bS}_1}r^j_{1}\widehat{\Upsilon}^{jj}_1$, so that the prior mean of $\Upsilon^{jj}$ is approximately the average of observed asymptotic variances of randomized treatment effects.  Increasing $\alpha$ or $\lambda_0$ can lead to stronger priors, and vice versa.}

{A key challenge of conventional Bayesian updating in our framework is that it can fail to incorporate full information from new observations. Unlike usual posterior updates where new information consists of $n$ newly observed data points, in our context, the new information is a single asymptotic variance estimate $\widehat{\Upsilon}^{jj}_{m}$ for the $j$-th treatment effect derived from $r^j_m$ samples. The critical point is that $r^j_m\widehat{\Upsilon}^{jj}_{m}$ gives a more accurate estimate of $\Upsilon^{jj}$ through the additional $r^j_m-r^j_{m-1}$ samples collected. To properly incorporate this information, we need  to consider information in two forms: the plug-in estimate $\widehat{\Upsilon}^{jj}{m}$ and the sample size $r^j_m$ used to obtain it.  Since conventional Bayesian updating would lead to a posterior $\beta^j\mid\widehat{\Upsilon}^{jj}_{m}\sim\text{Gamma}(\eta_0+\alpha,\lambda_0+1/\widehat{\Upsilon}^{jj}_{m})$ that treats the new variance estimate as a single data point rather than recognizing that it contains information from $r^j_m$ data points, we propose updating the posterior of $\beta^j$ as follows: \[\beta^j\mid\widehat{\Upsilon}^{jj}_{m},r^j_m\sim\text{Gamma}(\eta_0+r^j_{m}\alpha, \lambda_0+1/\widehat{\Upsilon}^{jj}_{m}).\] This approach allows us to incorporate both the contribution of the $r^j_m$ samples used to estimate $\widehat{\tau}^j_{\mathcal{R},m}$ and its estimated asymptotic variance. The intuition is straightforward: we account for the increasing sample size by updating $\eta_0$ to $\eta_0+r^j_m\alpha$, similar to how we would update $\eta_0$ to $\eta_0+n\alpha$ in the regular case where $n$ additional observations are directly included in the likelihood. The estimate of interest is then incorporated in the updated scale parameter $\lambda_0+1/\widehat{\Upsilon}^{jj}_m$ as usual.}

{Posterior sampling is carried out as follows. After observing new estimates $\widehat{\Upsilon}_{m}^{jj}$ for some $j\in\mathbf{S}_m$, we estimate the next-stage variances $\Upsilon_{m+1}^{jj}$ for all $j\in\{1,...,J\}$ in two stages. First, we sample the posterior predictive of ${\Upsilon}^{jj}$ conditional on $\widehat{\Upsilon}_{m}^{jj}$ and $r_m^j$ (see Lines 1-4 of Algorithm~\ref{alg:TS}). Then, we scale these posterior predictive means by the appropriate sample sizes—either $r_m^j+L_{m+1}$ if intervention $j$ is selected in round $m+1$, or $r_m^j$ if not selected—to obtain the plug-in variance estimates for the next stage (Line 6 of Algorithm~\ref{alg:TS}). To ensure sufficient exploration during the process, we adopt Thompson sampling (TS) for selecting optimal designs. Algorithm~\ref{alg:TS} summarizes the adaptive design framework. }

\begin{algorithm}
\caption{Bayesian adaptive design via Thompson Sampling}
\label{alg:TS}
\begin{algorithmic}[0]
\For{$j = 1$ to $J$}
    \State Sample ${\beta}^j \mid \widehat{\Upsilon}_{m}^{jj} \sim \text{Gamma}(a_0 + r^j_{m}\alpha,\; \lambda_0+\mathbf{1}_{r^j_{m}>0}\widehat{\Upsilon}_{m}^{-jj})$
    \State Sample ${\Upsilon}^{jj} \mid \widehat{\Upsilon}_{m}^{jj} \sim \text{InvGamma}(\alpha,\; {\beta}^j)$
\EndFor
\Statex
\For{$k = 1$ to $J$}
    \State Calculate $\widehat{\bSigma}_{m+1}(k)$ from \eqref{eq:Sigma} using 
        $\{{\Upsilon}_{m+1}^{jj} = ({r_m^j+L_{m+1}\mathbf{1}_{j=k}})^{-1}{\Upsilon}^{jj}$ for all $j=1,\ldots,J\}$
    \State Calculate $R_{m+1}(k) = \widehat{\mathfrak{R}}_{m+1}(\bD,\; \widehat{\bSigma}_{m+1}(k),\; g^{\widehat{\lambda}_m})$
\EndFor
\Statex
\State \Return $S_{m+1} \in \arg\min_k \{R_{m+1}(k)\}$
\end{algorithmic}
\end{algorithm}
While we illustrate the algorithm assuming $|\bS_{m+1}|=1$, it can be easily extended to the case where $|\bS_{m+1}|=n$ by selecting the interventions corresponding to the $n$ smallest $\{R_{m+1}(k)\}_{k=1}^J$. In Proposition~\ref{prop:inf_many}, 
we show that under algorithm~\ref{alg:TS}, each intervention will be selected infinitely many times as $m\to\infty$, which assures the asymptotic behavior of the proposed shrinkage estimator.
\begin{prop}\label{prop:inf_many}
Let $\{\mathbf{S}_m\}$ be a sequence of actions selected according to Algorithm~\ref{alg:TS}. Suppose $\text{Var}\{R_m(k)\}<C_0$ for some $C_0>0$ for all $m,k$. Then, there exist hyperparameters $\alpha,\eta_0,\lambda_0>0$ such that the following properties hold:
\begin{enumerate}[label=(\roman*)]
    \item Each intervention $j$ will be selected infinitely many times as $m\to\infty$, i.e. $r_m^j\to\infty$ almost surely as $m\to\infty$ for all $j$.
\item There exists $M\in\mathbb{N}$ such that
$\sup_{k\in\{1,...,J\}}{R}_{m}(k) - \inf_{k\in\{1,...,J\}}{R}_{m}(k)\leq1$
for all $m\geq M$.
\end{enumerate}
\end{prop}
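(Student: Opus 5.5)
We argue by contradiction for part (i) and then derive part (ii) from the same mechanism. The basic idea is that the randomness in Algorithm~\ref{alg:TS} comes from the posterior draws of $\beta^j$ and $\Upsilon^{jj}$. If the hyperparameters make these draws diffuse enough, every $k$ keeps a non-vanishing probability of being the arg-min at every round.

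\emph{Step 1: a uniform lower bound on the selection probability.} Fix $k$ and condition on $\mathcal{H}_m$. Suppose $k$ is selected only finitely often, so $r_m^k$ stays bounded on that event. Then the posterior $\text{Gamma}(\eta_0+r_m^k\alpha,\lambda_0+\widehat{\Upsilon}^{-kk}_m)$ for $\beta^k$ does not concentrate. Its shape is bounded, so the draw of $\Upsilon^{kk}$ has an unbounded right tail with mass bounded below uniformly in $m$. Selecting $k$ replaces $\Upsilon^{kk}/r^k_m$ by $\Upsilon^{kk}/(r^k_m+L_{m+1})$ in $\widehat{\bSigma}_{m+1}(k)$, through the $\rho\bH\bUpsilon\bH^\T$ term of \eqref{eq:Sigma}. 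When $\Upsilon^{kk}$ is large, this lowers $\tr(\bD\widehat{\bSigma}_{m+1})$ and hence $R_{m+1}(k)$ by an amount of order $\Upsilon^{kk}L_{m+1}/\{r^k_m(r^k_m+L_{m+1})\}$, using the closed form in Theorem~\ref{thm:opt_shrink}. Meanwhile, the indices $j\neq k$ were selected infinitely often, so their $r_m^j\to\infty$ and their gains shrink, provided the marginal risk reduction decays in $r^j_m$.

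Choosing $\alpha$ small (or $\lambda_0$ small), we aim to show that there is $\epsilon>0$ with $P(S_{m+1}=k\mid\mathcal{H}_m)\ge\epsilon$ for all large $m$ on the event $\{\sup_m r^k_m<\infty\}$. The assumption $\text{Var}\{R_m(k)\}<C_0$ controls the competing indices, via Chebyshev, so that their risks lie in a bounded window with probability at least $1/2$. On that window, a large enough draw of $\Upsilon^{kk}$ forces $k$ to be the arg-min.

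\emph{Step 2: conclude (i).} The Thompson draws at different rounds are conditionally independent given the history. Apply the conditional (Lévy) Borel--Cantelli lemma: since $\sum_m P(S_{m+1}=k\mid\mathcal{H}_m)=\infty$ on the event above, $k$ is selected infinitely often almost surely. This contradicts the event, so $r^k_m\to\infty$ almost surely for every $k$.

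\emph{Step 3: part (ii).} Once $r^j_m\to\infty$ for all $j$, the difference $R_m(k)-R_m(k')$ depends on $(k,k')$ only through the terms $\Upsilon^{kk}/r_m^k-\Upsilon^{kk}/(r^k_m+L_m)$ propagated through $\bH$ and $\bD$. These are $O(\Upsilon^{kk}L_m/(r^k_m)^2)$. The variance bound $C_0$, together with the posterior of $\beta^k$ concentrating as its shape $\eta_0+r^k_m\alpha$ grows, keeps the $\Upsilon^{kk}$ draws tight. Hence the spread $\sup_kR_m(k)-\inf_kR_m(k)$ tends to zero, and in particular is at most $1$ beyond some $M$. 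I would state this almost surely, or in probability, matching the level of rigor of the statement, and choose $\eta_0,\lambda_0$ to make the prior scale of order one so the constant $1$ is attained.

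\emph{Main obstacle.} The hard part is Step 1: making the lower bound $\epsilon$ uniform. This requires showing that risk differences among well-sampled interventions stay bounded while the under-sampled intervention's potential gain can be made arbitrarily large with fixed positive probability. Two facts are needed. First, the $\bH$-weighted contribution of $\Upsilon^{kk}$ to $\tr(\bD\widehat{\bSigma})$ must be strictly positive; I would argue this from $\bD\succ0$ and $\bH$ having nonzero $k$-th column whenever $k\in\overline{\bS}_m$. Second, the eURE's shrinkage term must not offset this reduction; I would try to bound it using the nonpositivity of the subtracted term in Theorem~\ref{thm:opt_shrink}. If $k$ has never been selected, $r^k_m=0$ and the gain is formally infinite, so $k$ is chosen immediately; this case is trivial.
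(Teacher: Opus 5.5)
\paragraph{Gap in Step 1.} Your Step~2 framing is sound: a conditional (L\'evy) Borel--Cantelli argument driven by a lower bound on $P(S_{m+1}=k\mid\mathcal{H}_m)$ that holds uniformly in $m$. It is arguably cleaner than the paper's geometric product bound, which tacitly needs $P(E_n^c)<1$. The problem is Step~1: the mechanism you propose for the lower bound fails, and it fails precisely in the regime you invoke.

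You pass from ``selecting $k$ lowers $s(k):=\tr\{\bD\widehat{\bSigma}_{m+1}(k)\}$'' to ``hence lowers $R_{m+1}(k)$''. The closed form in Theorem~\ref{thm:opt_shrink} is not monotone in $s(k)$. Set $q:=\tr\{\bD\widehat{\bGamma}_m(\bI-\bH_m)^\T\}$ and $c:=(\bPsi\widehat{\btheta}_m)^\T\bD\bPsi\widehat{\btheta}_m$; neither depends on $k$. Then $R_{m+1}(k)=s(k)-\{s(k)-q\}^2/c$, so
\[
R_{m+1}(k)-R_{m+1}(k')=\{s(k)-s(k')\}\left\{1-\frac{s(k)+s(k')-2q}{c}\right\}.
\]
Let $w_{jj}:=(\bH_m^\T\bD\bH_m)_{jj}$. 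The draw of $\Upsilon^{kk}$ enters \emph{every} candidate's $s(\cdot)$, through $w_{kk}\Upsilon^{kk}/r_m^k$ or $w_{kk}\Upsilon^{kk}/(r_m^k+L_{m+1})$. This has two consequences:
\begin{itemize}
\item[(a)] You cannot keep the competitors' risks in a bounded window (your Chebyshev step) while sending $\Upsilon^{kk}$ to infinity.
\item[(b)] Once every $s(\cdot)$ exceeds $c/2+q$, the second factor is negative, so $s(k)<s(k')$ implies $R_{m+1}(k)>R_{m+1}(k')$.
\end{itemize}
A large enough draw therefore makes $k$ the arg-max, not the arg-min. The nonpositivity of the subtracted term, which you planned to use, does not address this, because what matters is the difference of the squared terms, not their sign. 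If instead you read Algorithm~\ref{alg:TS} literally, with $\lambda$ fixed at $\widehat{\lambda}_m$ in $g^{\widehat{\lambda}_m}$, then $R_{m+1}(k)$ is affine in $s(k)$ with slope $1-2\widehat{\lambda}_m$. The same reversal then occurs whenever $\widehat{\lambda}_m>1/2$.

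\paragraph{What is missing.} You need to stay on the increasing branch of $s\mapsto s-(s-q)^2/c$. The paper's proof does this, if loosely. Using $\widehat{\bGamma}_m\to0$, it writes $R_{m+1}(k)=x(k)\{1-x(k)/c\}+o_p(1)$ with $x(k)=\sum_j w_{jj}\widehat{\Upsilon}_{m+1,jj}(k)$. It treats minimizing $R_{m+1}$ as minimizing $x$, i.e.\ maximizing the gain $w_{kk}\widehat{\Upsilon}_{kk}\{1/r^k_{m-1}-1/(r^k_{m-1}+L_m)\}$. It then observes that this factor is frozen while $k$ goes unselected, whereas the competitors' factors can only shrink. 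From this it gets monotone conditional non-selection probabilities.

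To repair your route, use \emph{moderate} draws rather than large ones. You want an event on which three things hold:
\begin{itemize}
\item[(1)] $\Upsilon^{kk}$ lies in a small fixed window, so that $x(\cdot)<c/2$ for every candidate;
\item[(2)] $k$'s gain still beats those of the indices with $r_m^j\to\infty$, whose gains vanish in probability;
\item[(3)] the other finitely-sampled indices draw small values, so their gains are also beaten.
\end{itemize}
The inverse-gamma draws have full support and are independent across $j$, the posteriors of finitely-sampled indices freeze, and $c$ is bounded away from zero. Together these give a probability for this event bounded below uniformly in $m$, and your Borel--Cantelli step then goes through.

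\paragraph{Two smaller corrections.}
\begin{itemize}
\item[(i)] On $\{\sup_m r_m^k<\infty\}$ you only know that \emph{some} other index is sampled infinitely often, not all of them.
\item[(ii)] The never-selected case is not trivial. With $\bH_m$ as defined in the Appendix, the $k$-th column vanishes for $k\notin\overline{\bS}_m$, so $w_{kk}=0$. Even a formally infinite gain would land on the decreasing branch.
\end{itemize}
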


We then establish a regret bound for the proposed TS algorithm, which is a direct result from \citet{russo2014learning}.
\[
\E\left(\sum_{m=1}^M \widehat{\mathfrak{R}}_{m}\right)=O(M^{1/2}).
\]

    


\section{Simulation Studies}\label{sec:simu}
We evaluate the performance of the proposed optimal shrinkage estimators and adaptive design strategies in this section. We consider $J=100$ interventions with oracle treatment effects $\tau^{j*}=\mathbf{1}_{j\leq 50}-\mathbf{1}_{j>50}$. The outcome model for both observational and RCT studies is:
\begin{align*}\label{eq:outmodel}
    Y_{i,m} &= \sum_{j=1}^n A^j_{i,m}\tau^{j*} + h(\bX_{i,m})^\T\bbeta + \bU_{i,m}^\T\balpha+\sum_{j=1}^n A_{i,m}^j\epsilon^j_{1,i,m}+(1-A_{i,m}^j)\epsilon_{0,i,m},
\end{align*}
where $\epsilon^j_{1,i,m}\overset{iid}{\sim}N(0,\mathbf{1}_{j>50}+0.1\mathbf{1}_{j\leq50})$, $\epsilon_{0,i,m}\overset{iid}{\sim}N(0,0.1)$, and $h(\cdot):\mathcal{X}\to\mathbb{R}^p$ is a mapping function defined later. This setting mimics scenarios with heterogeneous variances across interventions. Regression parameters are set to be $\beta_k = \mathbf{1}_p$  and $\alpha_k=0.5\mathbf{1}_J$. For notation simplicity, we drop the subscripts $(i,m)$ in the rest of the section. We generate $\bX\overset{iid}{\sim}N_{5}(\mathbf{0},\bSigma)$, where $\bSigma_{kk}=1$,  $P(\Sigma_{kk'}=0.15)=0.7$ and $P(\Sigma_{kk'}=0)=0.3$ for all $k\neq k'$. The  $j-$th intervention attributes are sampled as $\bV^j\overset{iid}{\sim} N_3
\left(\mathbf{0}_3,\mathbf{I}_3\right)$. For the observational study, 
we assume that the $j$-th intervention assignment follows a Bernoulli distribution with probability $P(A^j=1\mid \bX, U^j)=1/\{1+\exp(-f(\bX)^\T\bgamma+2U^j)\},$ where $f(\cdot):\mathcal{X}\to\mathbb{R}^{p_A}$ is a mapping function and $\bgamma=0.5\cdot\mathbf{1}_{p_A}$. The unobserved confounder $\bU$ is generated from a zero-mean Gaussian process model with the Mat\`ern-1/2 kernel function. 

 For RCT data, we sample $W_{i,m}$, the intervention to be randomized in experiment $i$ at round $m$, randomly from $\bS_m$ as described in Section~\ref{sec:setup}. The $j$-th intervention assignment in the RCT study thus follows a Bernoulli distribution with $P(A^j=1\mid\bX,U^j, W)=0.5\mathbf{1}_{W=j}+1/\{1+\exp(-f(\bX)^\T\bgamma+2U^j)\}\mathbf{1}_{W\neq  j}$, where $\bX$ and $\bU$ are generated the same as in the observational study. In this numerical experiment, we set $f(\mathbf{x})=\mathbf{x}$ and $h(\mathbf{x})=(x_1,x_2,x_3,x_4,x_5,x_1x_2,x_1x_3,x_1x_4,x_2x_3,x_2x_4,x_3x_4).$

We consider standard squared error loss with identity weighting matrix $\bD=\mathbf{I}_J/J$, and set the sample sizes of the observational and RCT study to be $N_m=5,000$ and $L_m=2,000$ for all $m$, respectively. We use degree-three spline regressions for the bias model and fit the propensity score model using logistic regression with spline basis functions of context $\bX$. Initially, we randomly select 15 interventions to be evaluated in the RCT study, and select $n=5$ additional interventions for randomized evaluation at each subsequent round. For hyperparameters, we pick $\alpha=5, a_0=10$, and $\lambda_0=0.05$. All experiments are run on 2.5GHz Intel Xeon Platinum 8259CL CPU with 16 vCPUs and 8G RAM.

\begin{figure}[h]
    \centering
    \includegraphics[width=0.49\linewidth]{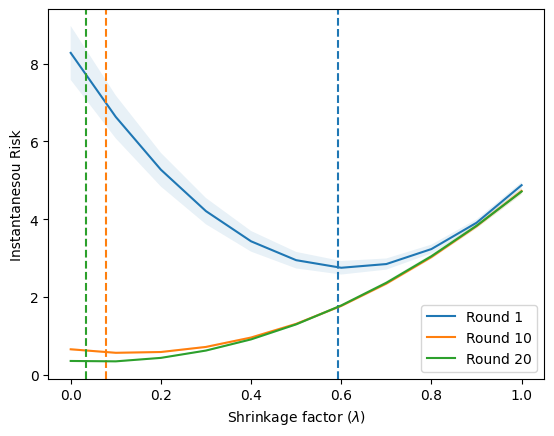}
           \includegraphics[width=0.49\linewidth]{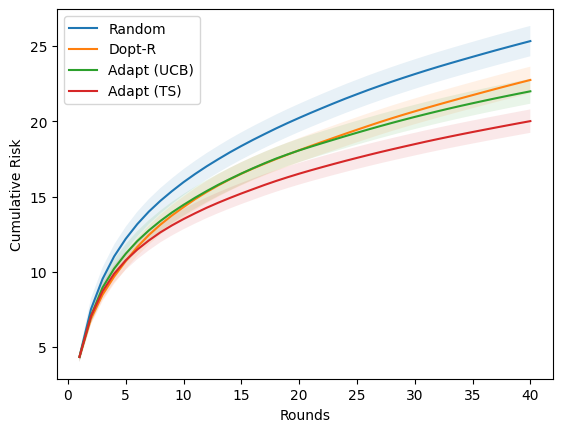} 
    \caption{[Left] Risk of $\widehat{\btau}^\lambda$ using different shrinkage factors $\lambda\in[0,1]$ at round 1, 10, and 20. Dotted vertical lines indicate the estimated optimal shrinkage factor ($\widehat{\lambda}^{*}$); [Right] Cumulative risk (log scale) of various designs and sampling methods using the optimal shrinkage estimator.}
    \label{fig:simu}
\end{figure}

We benchmark our approach against several state-of-the-arts methods: (1) random sampling (Random), (2) D-optimal designs followed by random sampling (Dopt-R), and (3) proposed adaptive design with upper-confidence-bound algorithm (UCB). The Dopt-R approach uses D-optimal sequential designs until each intervention has been selected once, after which it shifts to random sampling since traditional D-optimal designs are not designed to repeatedly sample the same interventions.
Figure~\ref{fig:simu} shows the performance of the optimal shrinkage estimator and adaptive design. First, we validate that the estimated optimal shrinkage factor (dotted line) accurately identifies the risk-minimizing value, with $\widehat{\lambda}^*$ decreasing over rounds as RCT confidence grows. This matches our expectation, as larger RCT samples provide greater confidence in bias estimates, naturally shifting preference toward the fully-debiased estimator $(\lambda=0)$. The right plot of Figure~\ref{fig:simu} presents the cumulative risk across different designs and sampling methods. While D-optimal designs demonstrate strong initial performance, their efficiency diminishes after exhausting unique intervention selections. In contrast, our proposed adaptive designs achieve substantially lower cumulative risk compared to both random and D-optimal approaches, with the Thompson Sampling implementation showing particularly strong performance.

\section{Application}\label{sec:da}

Finally, we evaluate our proposed shrinkage estimators and adaptive sampling method using Amazon's advertising campaign data. In collaboration with Amazon Ads ECON team, we analyze 2,583 campaigns implemented in 2024. Figure~\ref{fig:sponsordisplay} provides an example of the campaigns of interest. The treatment effect of interests in this study is the change in product page view rate. RCT campaign-level treatment effects and their standard errors are estimated through a ghost-ads infrastructure \citep[]{johnson2017ghost}, while observational campaign-level effects are derived by aggregating impression-level effects from a DNN-uplift model developed by the Ads ECON team. Based on domain expertise, we identify 18 intervention attributes, including campaign types, media types, targeted product price, and targeted product views. The bias model is fit using spline regression of degree three.
\begin{figure}[h]
    \centering
    \includegraphics[width=1\linewidth]{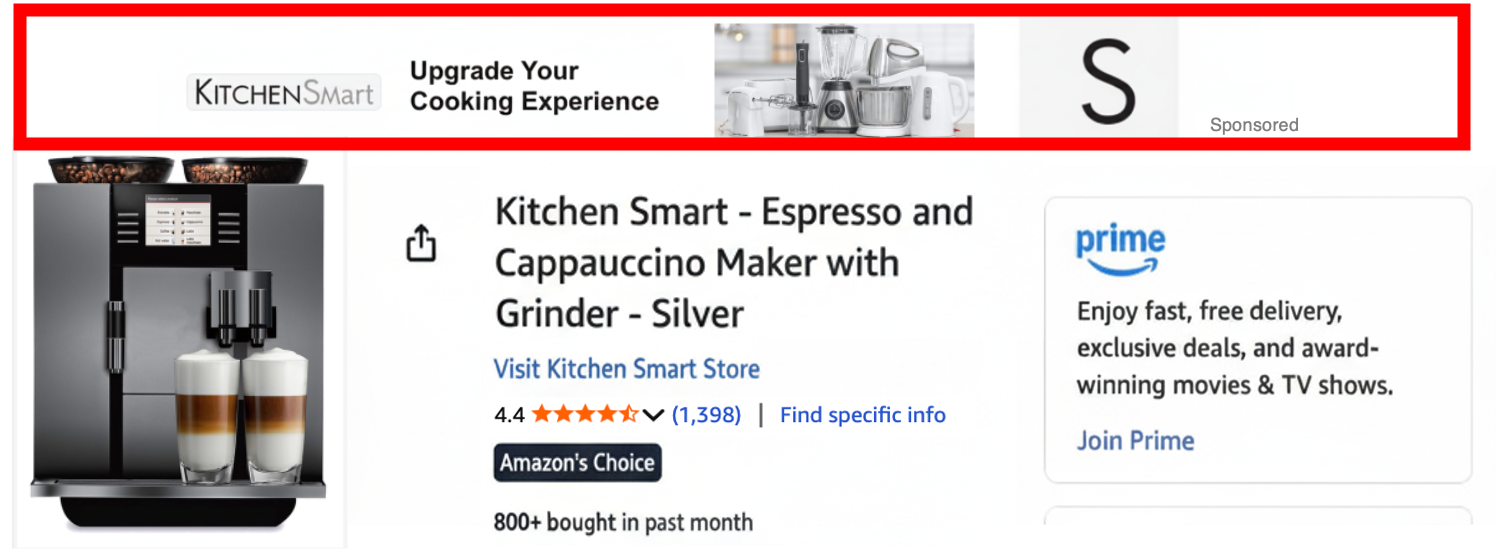}
    \caption{ An example of the advertising campaigns of interest, highlighted by the red box. Brands and merchants are fictional.}\label{fig:sponsordisplay}
\end{figure}

In implementing our adaptive design framework, we initialize with data from 500 RCT campaigns. At each round, we select 100 campaigns for RCT evaluation in each of the 20 rounds. Since it is an offline analysis and it is infeasible to update RCT effects without additional experimental data, choosing previously-sampled campaigns for RCT evaluation will not provide new information in this analysis. Thus, we focus on change without replacement sampling scheme in this real data application.  We compare our proposed method against random sampling, {which represents the standard approach widely used in practice for this type of experimental design task.}

Figure~\ref{fig:DA} shows the estimated optimal shrinkage factors and RCT costs versus instantaneous risk difference. Since oracle treatment effects for all treatments are unknown, we use risk difference as a proxy metric to evaluate model performance. We define risk difference as the difference between our shrinkage estimator $\widehat{\btau}^{\lambda}_m(\overline{\mathbf{S}}_m)$ based on RCT data from selected campaigns and an optimal benchmark estimator that uses RCT data from all 2,583 campaigns. First, we observe that the estimated optimal shrinkage factors derived using our sampling method is significantly lower than the of the random sampling, demonstrating that our approach identifies informative interventions for randomization more effectively. As $m$ increases, both estimated optimal shrinkage factors converge, which aligns with our expectation since by round 20, both methods will have leveraged nearly all 2,583 randomized campaigns. On the other hand, The right plot of Figure~\ref{fig:DA} reveals substantial cost savings that can be benefited from our method: the proposed sampling method achieves the same performance level that random sampling achieves (instantaneous risk: 0.02) while reducing the costs by approximately 50\%. While these cost estimates are proportional, they indicate the significant resource savings potential of our adaptive approach. Additional cost savings are possible due to opportunity costs of impressions and unrealized incremental conversions.
\begin{figure}[h]
    \centering
    \includegraphics[width=0.49\linewidth]{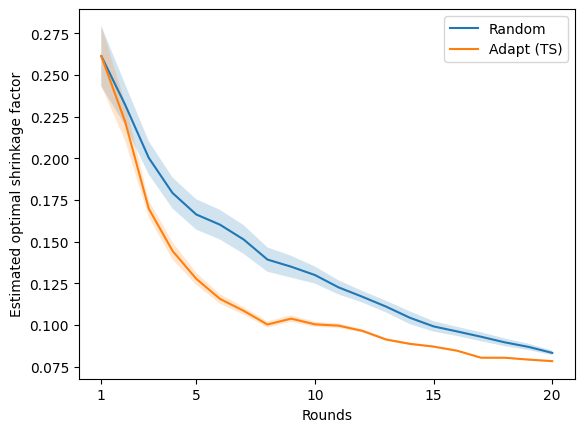}
    \includegraphics[width=0.49\linewidth]{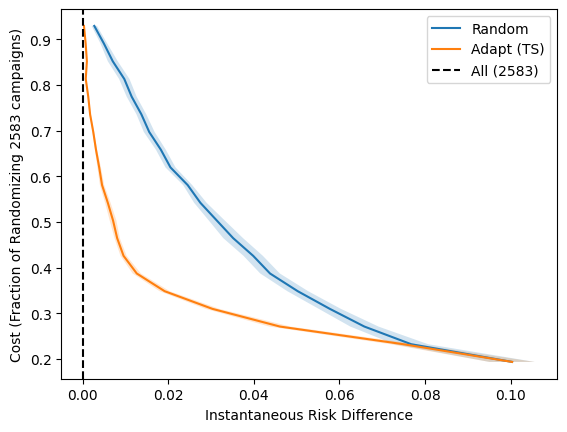}
    \caption{[Left] Estimated optimal shrinkage factors; [Right] Estimated RCT costs versus instantaneous risk. The risk difference is defined as the the difference between the risk of $\widehat{\btau}_m^{\lambda}(\overline{\bS}_m)$ and the risk of the shrinkage estimator using RCT data of all 2,583 campaigns. Shaded areas represent one standard error. Dotted line indicates the risk using all RCT estimates at the final round.}
    \label{fig:DA}
\end{figure}

\section{Discussion}\label{sec:discussion}
We proposed an optimal shrinkage estimator and adaptive sampling framework for 
multiple campaign effect estimation in large marketplaces, where evaluating all 
campaigns through RCTs is impractical. Our Bayesian adaptive design framework 
maximizes resource efficiency by judiciously selecting which campaigns to 
evaluate through RCT at each time point. Application of our method to Amazon's 
advertising campaign data demonstrates substantial efficiency gains, reducing 
costs by approximately 50\% compared to random sampling. To the best of our 
knowledge, this paper is the first to develop: 1) an optimal shrinkage estimator 
that handles missing randomized data, and 2) a sequential sampling algorithm for 
efficient implementation of randomized experiments.

{There are several future directions worth investigating. First, allowing different shrinkage factors across interventions could improve performance by applying more shrinkage to where RCT sample variance is larger. Second, our current method assumes the de-biased model is correctly specified. While we adopt flexible feature mapping to capture nonlinear bias, the model may yield misleading results if important features are omitted. Developing a feature selection framework for the de-biased model and assessing model robustness under misspecification would therefore be valuable extensions. Finally, adapting our sampling method to account for heterogeneous intervention costs would broaden its practical applications across different marketplace settings.}

\section*{Acknowledgement}
The authors Yen-Chun Liu, Alexander Volfovsky, and Eric Laber acknowledge support from Amazon.

\spacingset{1.0}
\setlength{\bibsep}{0pt plus 0.3ex}
\footnotesize
\bibliographystyle{apalike}
\bibliography{biometrika_reference}

\begin{thebibliography}{20}
\expandafter\ifx\csname natexlab\endcsname\relax\def\natexlab#1{#1}\fi

\bibitem[{Chen et~al.(2015)Chen, Owen \& Shi}]{chen2015data}
\textsc{Chen, A.}, \textsc{Owen, A.~B.} \& \textsc{Shi, M.} (2015).
\newblock Data enriched linear regression.
\newblock \textit{Electronic journal of statistics} \textbf{9}, 1078--1112.

\bibitem[{Colnet et~al.(2020)Colnet, Mayer, Chen, Dieng, Li, Varoquaux, Vert,
  Josse \& Yang}]{colnet2020causal}
\textsc{Colnet, B.}, \textsc{Mayer, I.}, \textsc{Chen, G.}, \textsc{Dieng, A.},
  \textsc{Li, R.}, \textsc{Varoquaux, G.}, \textsc{Vert, J.-P.}, \textsc{Josse,
  J.} \& \textsc{Yang, S.} (2020).
\newblock Causal inference methods for combining randomized trials and
  observational studies: a review.
\newblock \textit{arXiv preprint arXiv:2011.08047} .

\bibitem[{Dawson \& Lavori(2008)}]{dawson2008sequential}
\textsc{Dawson, R.} \& \textsc{Lavori, P.~W.} (2008).
\newblock Sequential causal inference: Application to randomized trials of
  adaptive treatment strategies.
\newblock \textit{Statistics in Medicine} \textbf{27}, 1626--1645.

\bibitem[{Degtiar \& Rose(2021)}]{degtiar2021review}
\textsc{Degtiar, I.} \& \textsc{Rose, S.} (2021).
\newblock A review of generalizability and transportability.
\newblock \textit{arXiv preprint arXiv:2102.11904} .

\bibitem[{Fourdrinier et~al.(2018)Fourdrinier, Strawderman \&
  Wells}]{fourdrinier2018shrinkage}
\textsc{Fourdrinier, D.}, \textsc{Strawderman, W.~E.} \& \textsc{Wells, M.~T.}
  (2018).
\newblock \textit{Shrinkage estimation}.
\newblock Springer.

\bibitem[{Funk et~al.(2011)Funk, Westreich, Wiesen, St{\"u}rmer, Brookhart \&
  Davidian}]{funk2011doubly}
\textsc{Funk, M.~J.}, \textsc{Westreich, D.}, \textsc{Wiesen, C.},
  \textsc{St{\"u}rmer, T.}, \textsc{Brookhart, M.~A.} \& \textsc{Davidian, M.}
  (2011).
\newblock Doubly robust estimation of causal effects.
\newblock \textit{American journal of epidemiology} \textbf{173}, 761--767.

\bibitem[{Gui(2020)}]{gui2020combining}
\textsc{Gui, G.} (2020).
\newblock Combining observational and experimental data using first-stage
  covariates.
\newblock \textit{arXiv preprint arXiv:2010.05117} .

\bibitem[{Johnson et~al.(2017)Johnson, Lewis \& Nubbemeyer}]{johnson2017ghost}
\textsc{Johnson, G.~A.}, \textsc{Lewis, R.~A.} \& \textsc{Nubbemeyer, E.~I.}
  (2017).
\newblock Ghost ads: Improving the economics of measuring online ad
  effectiveness.
\newblock \textit{Journal of Marketing Research} \textbf{54}, 867--884.

\bibitem[{Kallus et~al.(2018)Kallus, Puli \& Shalit}]{kallus2018removing}
\textsc{Kallus, N.}, \textsc{Puli, A.~M.} \& \textsc{Shalit, U.} (2018).
\newblock Removing hidden confounding by experimental grounding.
\newblock \textit{Advances in neural information processing systems}
  \textbf{31}.

\bibitem[{Kaptein \& Eckles(2012)}]{kaptein2012heterogeneity}
\textsc{Kaptein, M.} \& \textsc{Eckles, D.} (2012).
\newblock Heterogeneity in the effects of online persuasion.
\newblock \textit{Journal of Interactive Marketing} \textbf{26}, 176--188.

\bibitem[{Lewis \& Rao(2015)}]{lewis2015unfavorable}
\textsc{Lewis, R.~A.} \& \textsc{Rao, J.~M.} (2015).
\newblock The unfavorable economics of measuring the returns to advertising.
\newblock \textit{The Quarterly Journal of Economics} \textbf{130}, 1941--1973.

\bibitem[{Moodie et~al.(2007)Moodie, Richardson \&
  Stephens}]{moodie2007demystifying}
\textsc{Moodie, E.~E.}, \textsc{Richardson, T.~S.} \& \textsc{Stephens, D.~A.}
  (2007).
\newblock Demystifying optimal dynamic treatment regimes.
\newblock \textit{Biometrics} \textbf{63}, 447--455.

\bibitem[{Niemiro(1992)}]{niemiro1992asymptotics}
\textsc{Niemiro, W.} (1992).
\newblock Asymptotics for m-estimators defined by convex minimization.
\newblock \textit{The Annals of Statistics} , 1514--1533.

\bibitem[{Rosenman et~al.(2020)Rosenman, Basse, Owen \&
  Baiocchi}]{rosenman2020combining}
\textsc{Rosenman, E.}, \textsc{Basse, G.}, \textsc{Owen, A.} \&
  \textsc{Baiocchi, M.} (2020).
\newblock Combining observational and experimental datasets using shrinkage
  estimators.
\newblock \textit{arXiv preprint arXiv:2002.06708} .

\bibitem[{Russo \& Van~Roy(2014)}]{russo2014learning}
\textsc{Russo, D.} \& \textsc{Van~Roy, B.} (2014).
\newblock Learning to optimize via information-directed sampling.
\newblock In \textit{Advances in Neural Information Processing Systems}.

\bibitem[{Strawderman(2003)}]{strawderman2003minimax}
\textsc{Strawderman, W.~E.} (2003).
\newblock On minimax estimation of a normal mean vector for general quadratic
  loss.
\newblock \textit{Lecture Notes-Monograph Series} , 3--14.

\bibitem[{Toth et~al.(2022)Toth, Lorch, Knoll, Krause, Pernkopf, Peharz \&
  Von~K{\"u}gelgen}]{toth2022active}
\textsc{Toth, C.}, \textsc{Lorch, L.}, \textsc{Knoll, C.}, \textsc{Krause, A.},
  \textsc{Pernkopf, F.}, \textsc{Peharz, R.} \& \textsc{Von~K{\"u}gelgen, J.}
  (2022).
\newblock Active bayesian causal inference.
\newblock \textit{Advances in Neural Information Processing Systems}
  \textbf{35}, 16261--16275.

\bibitem[{Vermeulen \& Vansteelandt(2015)}]{vermeulen2015bias}
\textsc{Vermeulen, K.} \& \textsc{Vansteelandt, S.} (2015).
\newblock Bias-reduced doubly robust estimation.
\newblock \textit{Journal of the American Statistical Association}
  \textbf{110}, 1024--1036.

\bibitem[{Yang et~al.(2020)Yang, Zeng \& Wang}]{yang2020improved}
\textsc{Yang, S.}, \textsc{Zeng, D.} \& \textsc{Wang, X.} (2020).
\newblock Improved inference for heterogeneous treatment effects using
  real-world data subject to hidden confounding.
\newblock \textit{arXiv preprint arXiv:2007.12922} .

\bibitem[{Zhang et~al.(2023)Zhang, Cammarata, Squires, Sapsis \&
  Uhler}]{zhang2023active}
\textsc{Zhang, J.}, \textsc{Cammarata, L.}, \textsc{Squires, C.},
  \textsc{Sapsis, T.~P.} \& \textsc{Uhler, C.} (2023).
\newblock Active learning for optimal intervention design in causal models.
\newblock \textit{Nature Machine Intelligence} \textbf{5}, 1066--1075.

\end{thebibliography}

\newpage
\renewcommand{\appendixpagename}{Appendix}

\begin{appendices}
\section{Proof of Lemmas and Theorem}
\subsection*{Proof of Lemma 3.1}
The following is a well-known result in linear algebra.  We include it here along with a proof for completeness.
\begin{lemma}
Suppose that $\bSigma$ and $\bQ$ are symmetric and  strictly positive definite.  T
hen there exists a non-singular matrix $\bOmega$ such that
$\bOmega\bQ \bOmega^{\T} = I$ and $(\bOmega^{\T})^{-1}\bSigma \bOmega^{-1} = \bLambda$, where
$\bLambda$ is a diagonal matrix.   
\end{lemma}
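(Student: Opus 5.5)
The plan is to reduce the statement to the spectral theorem for symmetric matrices by first whitening with respect to $\bQ$ and then diagonalizing what remains. Since $\bQ$ is symmetric and strictly positive definite, it has a symmetric positive definite square root $\bQ^{1/2}$, obtained from its eigendecomposition $\bQ = \bU_Q \bD_Q \bU_Q^{\T}$ by taking square roots of the positive eigenvalues. Set $\bOmega_1 = \bQ^{-1/2}$. Then $\bOmega_1 \bQ \bOmega_1^{\T} = \bI$. So the $\bQ$-condition holds for $\bOmega_1$ and, as we will check, is preserved by any further orthogonal rotation.

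Next we transform $\bSigma$ compatibly. With $\bOmega_1$ symmetric, $(\bOmega_1^{\T})^{-1}\bSigma\bOmega_1^{-1} = \bQ^{1/2}\bSigma\bQ^{1/2}$. This matrix is symmetric, and it is positive definite because $\bQ^{1/2}$ is nonsingular and $\bSigma$ is positive definite. By the spectral theorem, write $\bQ^{1/2}\bSigma\bQ^{1/2} = \bV\bLambda\bV^{\T}$, where $\bV$ is orthogonal and $\bLambda$ is diagonal with positive entries.

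Now define $\bOmega = \bV^{\T}\bOmega_1 = \bV^{\T}\bQ^{-1/2}$, which is nonsingular as a product of nonsingular matrices. For the first condition, $\bOmega\bQ\bOmega^{\T} = \bV^{\T}\bQ^{-1/2}\bQ\bQ^{-1/2}\bV = \bV^{\T}\bV = \bI$. For the second, $\bOmega^{-1} = \bQ^{1/2}\bV$ and $(\bOmega^{\T})^{-1} = \bV^{\T}\bQ^{1/2}$, so
\begin{equation*}
(\bOmega^{\T})^{-1}\bSigma\bOmega^{-1} = \bV^{\T}\bQ^{1/2}\bSigma\bQ^{1/2}\bV = \bLambda .
\end{equation*}

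Every step is routine, and there is no real obstacle. The only point needing care is the bookkeeping of transposes and inverses in $(\bOmega^{\T})^{-1}$ and $\bOmega^{-1}$, so that the orthogonal factor $\bV$ lands on the correct side in both identities. Symmetry of $\bQ^{\pm1/2}$ is what makes this bookkeeping work.
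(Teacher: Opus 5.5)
Your proof is correct and follows essentially the same route as the paper: whiten the matrix that must be sent to $\bI$ using a square-root factor, then orthogonally diagonalize the resulting congruence-transformed other matrix, and compose the two maps. The differences are cosmetic. The paper factors the matrix to be whitened as $\bC^{\T}\bC$ via Cholesky and sets $\bOmega=\bCalO^{\T}\bC^{-\T}$, obtaining $\bCalO$ by diagonalizing the inverse $\bC^{-\T}\bD^{-1}\bC^{-1}$ and then inverting back. You instead use the symmetric square root $\bQ^{1/2}$ and diagonalize $\bQ^{1/2}\bSigma\bQ^{1/2}$ directly, which keeps the transpose bookkeeping simpler.
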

\begin{proof}
Write $\bSigma = \bC^{\T}\bC$, then as $\bC^{-\T}\bD^{-1}\bC^{-1}$ is also symmetric, there
exists orthogonal matrix $\bCalO$ such that $\bCalO^{\T}\left\lbrace 
\bC^{-\T}\bD^{-1}\bC^{-1}
\right\rbrace \bCalO = \bLambda^{-1}$ where $\bLambda^{-1}$ is diagonal.
Hence, inverting both sides, it follows that 
$\bCalO^{\T}\bC\bD\bC^{\T}\bCalO = \bLambda$, where
we have used the fact that $\mathcal{O}^{-1} = \mathcal{O}^{\T}.$
Define $\bOmega = \bCalO^{\T}\bC^{-\T}$.  It easily verified 
that $\bOmega$ satisfies the desired properties, as 
\begin{eqnarray*}
    \bOmega^{-\T}\bD \bOmega^{-1} &=& 
    \left(
    \bCalO^{\T}\bC^{-\T}
    \right)^{-\T}\bSigma 
    \left(
    \bCalO^{\T}\bC^{-\T}
    \right)^{-1} \\ 
    &=& \bCalO^{\T}\bC^{\T}\bD \bC \bCalO \\
    &=& \bLambda,
\end{eqnarray*}
and 
\begin{eqnarray*}
\bOmega \bSigma \bOmega^{\T} &=& \bCalO^{\T}\bC^{-\T}\bSigma \bC^{-1}\bCalO \\
&=& \bCalO^{\T}\bC^{-\T}\bC^{\T}\bC \bC^{-1}\bCalO \\ 
&=& \bCalO^{\T}\bCalO \\
&=& I. 
\end{eqnarray*}
\end{proof}

\begin{proof}[Proof of Lemma 3.1]
Let $\bSigma = \bC^{\T}\bC$ be the Cholesky decomposition 
of $\bSigma$ and let $\bCalO$ be an orthogonal matrix
such that $\bCalO^{\T}\bC^{\T}\bD\bC\bCalO = \bLambda = \mathrm{diag}(\lambda_1,\ldots, \lambda_p)$,
and define $\bOmega = \mathcal{O}^{\T}\bC^{-\T}$, where $\mathbf{C}^{-\T} = (\mathbf{C}^{\T})^{-1}.$ 
The risk associated with $\kappa(\bZ, \bY)$ is 
\begin{eqnarray*}
    \mathfrak{R}\left\lbrace \bD,\bSigma,g
    \right\rbrace &=& 
    \mathbb{E}\mathcal{L}_{\bD} \left\lbrace
        \kappa(\bZ, \bY), \bmu
    \right\rbrace \\[3pt]
    &=& \left\lbrace
    \kappa(\bZ, \bY) - \bmu
    \right\rbrace^{\T}\bD 
    \left\lbrace
    \kappa(\bZ, \bY) - \bmu    
    \right\rbrace \\[3pt]
&=& \left\lbrace 
\bOmega \kappa(\bZ, \bY) - \bOmega \bmu
\right\rbrace
\bOmega^{-\T} \bD \bOmega^{-1}
\left\lbrace^{\T} 
\bOmega \kappa(\bZ, \bY) - \bOmega \bmu
\right\rbrace \\[3pt]
&=&
\left\lbrace 
\bOmega \kappa(\bOmega^{-1}\widetilde{\bZ}, \bOmega^{-1}\widetilde{\bY}) - \widetilde{\bmu}
\right\rbrace^{\T}
\bLambda 
\left\lbrace 
\bOmega \kappa(\bOmega^{-1}\widetilde{\bZ}, \bOmega^{-1}\widetilde{\bY}) - \widetilde{\bmu}
\right\rbrace^{\T},
\end{eqnarray*}
where $\widetilde{\bZ} = \bOmega \bZ$, $\widetilde{\bY} = \bOmega Y$, and $\widetilde{\bmu} = \bOmega \bmu$.
Note that 
\begin{eqnarray*}
    \bOmega \kappa(\bOmega^{-1}\widetilde{\bZ}, \bY) &=& 
    \widetilde{\bZ} + \bOmega \bSigma g(\bOmega^{-1}\widetilde{\bZ}, \bOmega^{-1}\widetilde{\bY}) \\[3pt] 
    &=& \widetilde{\bZ} + \widetilde{g}(\widetilde{\bZ}, \widetilde{\bY}),
\end{eqnarray*}
where $\widetilde{g}(\widetilde{\bZ}, \widetilde{\bY}) = \bOmega\bSigma g(\bOmega^{-1}\widetilde{\bZ}, \bOmega^{-1}\widetilde{\bY})$. 
Because $\widetilde{\bZ} \sim \mathcal{N}(\widetilde{\bmu}, \mathbf{I})$,  
if we define 
$\widetilde{\kappa}(\widetilde{\bZ}, \widetilde{\bY}) = \bOmega \kappa(\bOmega^{-1}\widetilde{\bZ}, \bOmega^{-1}\widetilde{\bY})$,
it follows that $  \mathfrak{R}\left\lbrace \bD,\bSigma,g
    \right\rbrace = \mathfrak{R}
\left\lbrace \bLambda,\mathbf{I},\widetilde{g} \right\rbrace$;
this is slight extension of Theorem 3.13 in \citet{fourdrinier2018shrinkage}.  

Thus, applying Theorem 1 of \citet{rosenman2020combining}, we have 
\begin{equation*}
    \mathfrak{R}
\left\lbrace \bLambda,\mathbf{I},\widetilde{g} \right\rbrace = 
    \frac{1}{p}\mathrm{tr}(\bLambda) + \frac{1}{p}\mathbb{E}\Bigg[
    \sum_{j=1}^p\lambda_j 
    \widetilde{g}_j^2(\widetilde{\bZ}, \widetilde{\bY}) + 2
    \frac{
    \partial \widetilde{g}_j(\widetilde{\bZ}, \widetilde{\bY}) 
    }{
    \partial Z_j
    }
    \Bigg],
\end{equation*}
where $\bLambda = \mathrm{diag}(\lambda_1,\ldots, \lambda_p)$.  
\end{proof}

\subsection*{Proof of Lemma 3.2}
Let $\overline{N}_m = \sum_{\nu=1}^m N_m$ and $\overline{L}_m = \sum_{\nu=1}^m L_m$. Suppose $\sqrt{\overline{N}_m}\widehat{\btau}_{\mathcal{O},m}\overset{d}{\to} N(\btau_{\mathcal{O}}^*,\bGamma)$ and $\sqrt{\overline{L}_m}\widehat{\btau}_{\mathcal{R},m}\overset{d}{\to} N(\btau^*,\bUpsilon)$. The least square estimator $\widehat{\btheta}_m$ can be represented as
\[
\widehat{\btheta}_m = \left(\bPsi_{\overline{\mathbf{S}}_m}^\T\bPsi_{\overline{\mathbf{S}}_m}\right)^{-1}\widetilde{\bPsi}_{\overline{\mathbf{S}}_m}^\T \left\{\widehat{\btau}_{\mathcal{O},m}(\overline{\mathbf{S}}_m)-\widehat{\btau}_{\mathcal{R},m}(\overline{\mathbf{S}}_m)\right\},
\]
where 
$\bPsi_{\overline{\mathbf{S}}_m}\in\mathbb{R}^{|\{j\mid j\in\cup_{\nu=1}^m\overline{\mathbf{S}}_\nu\}|\times p_v}$ be a submatrix of $\bPsi$ consisting of the rows indexed by unique elements in $\cup_{\nu=1}^m{\mathbf{S}}_\nu$, and let
$\widetilde{\bPsi}_{\overline{\mathbf{S}}_m}\in\mathbb{R}^{{J}\times p_V}$ 
be $\bPsi$ with entries not indexed by $\cup_{\nu=1}^m{\mathbf{S}}_\nu$ replaced with zeros. Assume each intervention $j$ will be selected infinitely many times as $m\to\infty$, then
\begin{align*}
 &\frac{1}{\sqrt{\overline{N}_m^{-1}+\overline{L}_m^{-1}}\sqrt{\overline{N}_m}} \sqrt{\overline{N}_m}\widehat{\btau}_{\mathcal{O},m}\overset{d}{\to}N\left\{\widehat{\btau}^*_{\mathcal{O}},(1-\rho)\bGamma\right\},\text{ and}\\
 &\frac{1}{\sqrt{\overline{N}_m^{-1}+\overline{L}_m^{-1}}\sqrt{\overline{L}_m}} \sqrt{\overline{L}_m}\widehat{\btau}_{\mathcal{R},m}\overset{d}{\to}N\left(\widehat{\btau}^*,\rho\bUpsilon\right),
\end{align*}
and thus
\[\frac{1}{\sqrt{\overline{N}_m^{-1}+\overline{L}_m^{-1}}}(\bPsi\widehat{\btheta}_m-\bPsi\btheta^*)\overset{d}{\to}N_{p_V}\left[\mathbf{0},\bH\left\{(1-\rho)\bGamma + \rho\Upsilon \right\}\bH^\top\right],\]
if $\rho = \lim \overline{N}_m/(\overline{N}_m+\overline{L}_m)\in(0,1)$ and  $\bH=\bPsi(\bPsi^\T\bPsi)^{-1}{\bPsi}^\T.$ Since 
\[
\widehat{\btau}_{\mathcal{O},m} - \bPsi\widehat{\btheta}_m = (\mathbf{I}-\mathbf{H})\widehat{\btau}_{\mathcal{O},m}  + \mathbf{H} \widehat{\btau}_{\mathcal{R},m},
\]
we have
\[
\frac{1}{\sqrt{\overline{N}_m^{-1}+{\overline{L_m}}^{-1}}} \left(\widehat{\btau}_{\mathcal{O},m} - \bPsi\widehat{\btheta}_m-\btau^*\right)\overset{d}{\to} \mathcal{N}\left\{0, (1-\rho)(\mathbf{I}-\bH)\bGamma(\mathbf{I}-\bH)^\T + \rho\bH \Upsilon \bH^\T\right\},
\]
\subsection*{Proof of the closed form of eURE}
Recall that in Lemma 3.1 we show that \[
\mathfrak{R}\left\lbrace \bD,\bSigma,g
    \right\rbrace=
 \mathfrak{R}
\left\lbrace \bLambda,\mathbf{I},\widetilde{g} \right\rbrace=\frac{1}{p} \operatorname{tr}(\bLambda)+\frac{1}{p} \mathbb{E}\left[\sum_{j=1}^p \lambda_j\left\{ \widetilde{g}_j^2(\widetilde{\boldsymbol{Z}}, \widetilde{\boldsymbol{Y}})+2 \frac{\partial \widetilde{g}_j(\widetilde{\boldsymbol{Z}}, \widetilde{\boldsymbol{Y}})}{\partial \tilde{Z_j}}\right\}\right],
\]
where 
$\tilde{g}(\tilde{\bZ},Y)=\bOmega\bSigma g(\bOmega^{-1}\tilde{\bZ},Y)$, $\tilde{\kappa}(\tilde{\bZ},\tilde{Y})=\tilde{\bZ}+\tilde{g}(\tilde{\bZ},\tilde{Y})=\bOmega \kappa(\bOmega^{-1}\tilde{\bZ},\tilde{Y})$, and $\tilde{\bZ}=\bOmega\bZ.$ Note that elements in $\bLambda=(\lambda_1,\cdots,\lambda_p)^\T$  is different from the shrinkage factor $\lambda$. Let $\bZ = \widehat{\btau}_{\mathcal{O},m} - \Psi\widehat{\btheta}_m$, $\bY=\widehat{\btau}_{\mathcal{O},m}$, and $g^\lambda(\bZ,\bY)=\lambda\bSigma^{-1}(\bY-\bZ)$, so that $\kappa(\bZ,\bY)=\widehat{\btau}_{\mathcal{O},m}-(1-\lambda) \Psi\widehat{\btheta}_m$. We first derive the closed form of $\E \sum_{j=1}^p \lambda_j\left\{\partial \widetilde{g}_j(\widetilde{\boldsymbol{Z}}, \widetilde{\boldsymbol{Y}})/{\partial \tilde{Z_j}}\right\}$. By Stein's lemma, it is equivalent to evaluate
\begin{align*}
\E\left\{\left(\widetilde{\bZ}-\bOmega\btau^*\right)^\T \bLambda\widetilde{g}(\widetilde{\boldsymbol{Z}}, \widetilde{\boldsymbol{Y}})\right\}= \tr\left[\bLambda \text{Cov} \left(\widetilde{\bY}-\widetilde{\bZ},\widetilde{\bZ}\right)\right].
\end{align*}
Define $\bH_m=\bPsi (\bPsi_{\overline{\mathbf{S}}_m}^\T\bPsi_{\overline{\mathbf{S}}_m})^{-1}\widetilde{\bPsi}_{\overline{\mathbf{S}}_m}^\T$.
Note that \begin{align*}
    \widetilde{\bZ}&= \bOmega\left[\widehat{\btau}_{\mathcal{O},m} - \bPsi\{\bPsi_{\overline{\mathbf{S}}_m}^\T \bPsi_{\overline{\mathbf{S}}_m}\}^{-1}\widetilde{\bPsi}_{\overline{\mathbf{S}}_m}^\T(\widehat{{\btau}}_{\mathcal{O},m}-\widehat{{\btau}}_{\mathcal{R},m})\right]\\
    &=\bOmega\widehat{\btau}_{\mathcal{O},m} - \bOmega\bH_m\left\{\widehat{{\btau}}_{\mathcal{O},m}-\widehat{{\btau}}_{\mathcal{R},m}\right\}\\
    &=\bOmega\left(\mathbf{I}-\bH_m\right)\widehat{\btau}_{\mathcal{O},m}+\bOmega\bH_m\widehat{\btau}_{\mathcal{R},m},
\end{align*} 
and $\widetilde{\bY} = \bOmega\widehat{\btau}_{\mathcal{O},m}$. 
Thus,
\begin{align*}
   \tr\left[\bLambda \text{Cov} \left(\widetilde{\bY}-\widetilde{\bZ},\widetilde{\bZ}\right)\right]
    &=\tr\left[\bLambda\left\{ \bOmega\bGamma(\mathbf{I}-\bH_m)\bOmega^\T-\mathbf{I}\right\}\right]\\
    &=\tr\left\{\bLambda\bOmega\bGamma(\mathbf{I}-\bH_m)\bOmega^\T-\bLambda\right\}\\
 (\because \bOmega^{-\T}\bD\bOmega^{-1}=\bLambda)\quad   &=\tr\left\{\bD\bGamma(\mathbf{I}-\bH_m) - \bD\bSigma\right\}.
\end{align*}
Here we use the fact that that $\tr(\bLambda) = \tr(\bCalO^\T\bC\bD\bC^\T\bCalO)=\tr(\bD\bSigma).$
Suppose $\lambda$ is fixed. We therefore show that the risk of $\widehat{\btau}_{\mathcal{O},m}-(1-\lambda)\bPsi\widehat{\btheta}_m$ is
\begin{align*}
        \mathfrak{R}_m(\bD, \bSigma, g^\lambda)&= \frac{1}{p}\mathbb{E}\left[\tr(\bLambda) + \tilde{g}(\tilde{\bZ},Y)^\T\bLambda\tilde{g}(\tilde{\bZ},Y) + 2\lambda\tr\left\{\bD{\bGamma}(\mathbf{I}-\bH_m) - \bD{\bSigma}\right\}\right]\\
         &= \frac{1}{p}\left[\tr(\bLambda) + \lambda^2\{\bOmega(Y-\bZ)\}^\T\bLambda\{\bOmega(Y-\bZ)\}  + 2\lambda\tr\left\{\bD{\bGamma}(\mathbf{I}-\bH_m) - \bD{\bSigma}\right\}\right]\\
        &=\frac{1}{p}\left[\tr(\bD\bSigma) +\lambda^2 \E \left\{\left(\bPsi\widehat{\btheta}_m\right)^\T\bD\bPsi\widehat{\btheta}_m\right\}+2\lambda\tr\left\{\bD\bGamma\left(\mathbf{I}-\bH_m\right)^\T -\bD\bSigma\right\}\right].
\end{align*}
Replacing $\bSigma$ with $\widehat{\bSigma}_m=(\mathbf{I}-\bH_m)\widehat{\bGamma}_m(\mathbf{I}-\bH_m)^\T + \bH_m\widehat{\bUpsilon}_m\bH_m^\T$ and $\bGamma$ with $\widehat{\bGamma}_m$ yields
\[
 \widehat{\mathfrak{R}}_m(\bD, \widehat{\bSigma}_m, g^\lambda)=\frac{1}{p}\left[\tr(\bD\widehat{\bSigma}_m) +\lambda^2 \left(\bPsi\widehat{\btheta}_m\right)^\T\bD\bPsi\widehat{\btheta}_m+2\lambda\tr\left\{\bD\widehat{\bGamma}_m\left(\mathbf{I}-\bH_m\right)^\T -\bD\widehat{\bSigma}_m\right\}\right].
\]

\subsection*{Proof of Theorem 3.1}

Optimal $\widehat{\lambda}_{m}$ exists since $\widehat{\mathfrak{R}}_m(\bD, \widehat{\bSigma}_m, g^\lambda)$ is strictly convex when $\Psi\widehat{\btheta}_m\neq \boldsymbol{0}_p$. Solving $\partial\widehat{\mathfrak{R}}_m(\bD, \widehat{\bSigma}_m, g^\lambda)/\partial \lambda=0$ yields
\[
\widehat{\lambda}_{m} = \frac{\tr\left\{-\bD\widehat{\bGamma}_m(\mathbf{I}-\bH_m) + \bD\widehat{\bSigma}_m\right\}}{(\Psi\widehat{\btheta}_m)^\T\bD\Psi\widehat{\btheta}_m}.
\]
Plugging $\lambda=\widehat{\lambda}_m$ into $\widehat{\mathfrak{R}}_m(\bD, \widehat{\bSigma}_m, g^\lambda)$ and replacing expectation with realizations yield the eURE.


\subsection*{Proof of Proposition 3.1}
Let $E_m(j):=\{\text{intervention } j \text{ is selected at round } m\}$.  For notation simplicity, we denote $E_m$ as $E_m(j)$ and $E_m^c$ as the complement event in the following proof. Before proving Lemma 3.3., we will show the following proposition.
\begin{prop}
For sufficiently large $n$, the following inequality holds:
\[P\left(E_m^c\left| \bigcap_{t=n}^{m-1} E_t^c\right.\right)\leq P\left(E_{m-1}^c\left| \bigcap_{t=n}^{m-2} E_r^c\right.\right).  \]
\end{prop}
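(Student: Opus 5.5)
The plan is to show that, on the event that intervention $j$ has not been selected since round $n$, the probability that it is again not selected at round $m$ can only decrease as $m$ grows. On the event $\bigcap_{t=n}^{m-1}E_t^c$, the sample size $r^j_t$ stays frozen at $r^j_{n-1}$ for $t\ge n$. Consequently the posterior for $\beta^j$ in Algorithm~\ref{alg:TS}, namely $\text{Gamma}(\eta_0+r^j_{n-1}\alpha,\lambda_0+1/\widehat{\Upsilon}^{jj}_{n-1})$, and the sampled ${\Upsilon}^{jj}$ have the same law at every round $t\ge n$. In contrast, every other intervention $k\neq j$ that is selected keeps accumulating samples. Its plug-in variance $(r^k_t)^{-1}{\Upsilon}^{kk}$ therefore shrinks, and its posterior concentrates, since $\eta_0+r^k_t\alpha\to\infty$.

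The key steps are as follows.

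\begin{enumerate}[label=(\alph*)]
\item Write $E_m^c$ as the event $\{R_m(j)>\min_{k\neq j}R_m(k)\}$. Equivalently, the marginal risk reduction from adding $L_m$ samples to $j$ is smaller than for some competitor $k$.
\item Use the closed form of $\widehat{\bSigma}_{m}(k)$ from \eqref{eq:Sigma} together with the eURE of Theorem~\ref{thm:opt_shrink}. This shows that the risk reduction from selecting $k$ is driven by the term $\bH_m\,\mathrm{diag}\{(r^k_{m-1})^{-1}-(r^k_{m-1}+L_m)^{-1}\}{\Upsilon}^{kk}\bH_m^\T$, whose size is of order ${\Upsilon}^{kk}L_m/(r^k_{m-1})^2$.
\item For $j$ frozen, this gain $G_m(j)$ has a fixed distribution in $m$. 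For every $k$ that has been selected between rounds $m-1$ and $m$, the gain $G_m(k)$ is stochastically no larger than $G_{m-1}(k)$, because $r^k$ increased and the posterior of ${\Upsilon}^{kk}$ is centered near the truth with shrinking spread.
\item Conclude that $\max_{k\neq j}G_m(k)$ is stochastically dominated by $\max_{k\neq j}G_{m-1}(k)$. Since the Thompson draws are conditionally independent across $k$ given history, $P(G_m(j)<\max_{k\ne j}G_m(k)\mid\cdot)$ is nonincreasing, which gives the inequality.
\end{enumerate}

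The assumption $\mathrm{Var}\{R_m(k)\}<C_0$ and the choice of hyperparameters (large enough $\alpha$, $\eta_0$ so that the posterior means track $r^k\widehat{\Upsilon}^{kk}$) are what justify the stochastic ordering in step (c). The requirement ``$n$ sufficiently large'' is used so that $\widehat{\bGamma}_m$, $\bH_m$, and $\widehat{\lambda}_m$ have stabilized. The terms not involving the selected index then differ across rounds by $o(1)$ and cannot reverse the ordering.

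The main obstacle is step (c)/(d). $R_m(k)$ depends on round-$m$ plug-ins ($\widehat{\bGamma}_m$, $\widehat{\btheta}_m$, $\widehat{\lambda}_m$), not only on the $\Upsilon$ draws. Monotonicity therefore holds only up to estimation error, and stochastic dominance of a maximum over competitors needs a coupling argument. I would first try to couple the Gamma draws across consecutive rounds: a Gamma with larger shape and rescaled rate, compared to the previous one, can be written as a convex-order contraction. I would then absorb the plug-in fluctuations into the slack provided by large $n$, using the variance bound $C_0$ and Chebyshev's inequality.
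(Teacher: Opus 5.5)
\textbf{Overall.} Your route is the paper's route. You reduce the argmin of $R_m(k)$ to the argmax of the marginal gain $w_{kk}\Upsilon^{kk}\Delta(m,k)$, note that $j$'s gain is frozen while competitors' $\Delta$'s are nonincreasing, and conclude by monotonicity of the event. Your max-over-competitors form of $E_m^c$ in (a) is correct; the paper's proof wrongly writes it as an intersection over $j'\neq j$. However, the step you flag in (c)/(d) is a genuine gap, and coupling plus Chebyshev does not close it.

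\textbf{The conditioning in the statement.} The statement conditions on the \emph{events} $\bigcap_{t=n}^{m-1}E_t^c$, not on the history. The claim that $G_m(j)$ has a fixed law in $m$ is true only given $\mathcal{H}_{m-1}$. The extra conditioning on $E_{m-1}^c$ reweights histories toward those in which $j$'s frozen state $\xi$ makes $j$ unattractive; here $\xi$ means $r^j_{n-1}$, $\widehat{\Upsilon}^{jj}_{n-1}$, and hence the posterior from which $\Upsilon^{jj}$ is drawn. To see the direction of this effect, hold the competitors' states fixed and let $p(\xi)$ be the per-round probability of skipping $j$. Then $P(E_m^c\mid\bigcap_{t=n}^{m-1}E_t^c)=\mathbb{E}\{p(\xi)^{m-n+1}\}/\mathbb{E}\{p(\xi)^{m-n}\}$, which is nondecreasing in $m$ by Cauchy--Schwarz. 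So the selection effect pushes the inequality the \emph{wrong} way. In the real algorithm it competes with a relative decrease of order $L/r^k$ in the gain of the single competitor selected each round. Therefore (d) does not follow from (c), even granting (c). The paper's proof makes the same jump: it drops the conditioning and treats the Thompson draws as identically distributed across rounds.

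\textbf{Step (c).} Because the event is monotone in the competitors' gains, you need first-order dominance $G_m(k)\preceq_{\mathrm{st}}G_{m-1}(k)$. Posterior concentration gives at most a convex-order (mean-preserving contraction) relation, which does not imply this. Moreover, the draw $\Upsilon^{kk}\mid\beta^k\sim\mathrm{InvGamma}(\alpha,\beta^k)$ in Algorithm~\ref{alg:TS} has fixed shape $\alpha$ and never concentrates; only $\beta^k$ does. Writing $\Upsilon^{kk}=\beta^kV$ with $V\sim\mathrm{InvGamma}(\alpha,1)$, you would need $\Delta(m,k)\beta^k_{(m)}\preceq_{\mathrm{st}}\Delta(m-1,k)\beta^k_{(m-1)}$. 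This fails near $0$, because a Gamma with larger shape puts less mass there. Chebyshev with $C_0$ can only give an inequality up to an additive error, not the exact one stated.

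\textbf{Step (b).} By the eURE, $R_m(k)$ depends on $k$ only through $x(k)=\sum_i w_{ii}\Upsilon_{m,ii}(k)$, with slope $1-2\lambda$, where $\lambda$ is the relevant shrinkage factor. So argmin of risk equals argmax of gain only if $\lambda<1/2$, and this must be checked. The $\widehat{\bGamma}_m$-terms drop out because they do not depend on $k$, not because they are $o(1)$. The gaps between candidates are themselves $O(L/r^2)$, so an $o(1)$ argument cannot protect the ordering.

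\textbf{Suggested repair.} What Proposition~\ref{prop:inf_many}(i) actually needs is $P(\bigcap_{m\ge n}E_m^c)=0$, and you can get it without the stated inequality. On $\bigcap_{t=n}^{m-1}E_t^c$, show $P(E_m\mid\mathcal{H}_{m-1})\ge\delta(\xi)>0$. This uses the unbounded right tail of $j$'s $\mathrm{InvGamma}$ draw, tightness of the competitors' draws, and $w_{jj}>0$. Then apply L\'evy's conditional Borel--Cantelli lemma.
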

\begin{proof}
    Let $c:=(\bPsi\widehat{\btheta}_m)^\T\bD\bPsi\widehat{\btheta}_m$ be a positive constant and define $w_{jj}:= (\bH_m^\T\bD\bH_m)_{jj}$. Recall that $\overline{L}_{m}^j$ is the total sample size used for randomizing $j$ by the end of round $m$. Denote the plug-in variance estimate of the next-stage estimator $\tau^j_{\mathcal{R},m+1}$ where $k$ is selected at round $m+1$ as $\widehat{\Upsilon}_{m+1,jj}(k):=({\overline{L}^j_m+L_{m+1}\mathbf{1}_{k=j}})^{-1}\widehat{\Upsilon}_{jj}$, where $\widehat{\Upsilon}_{jj}\mid \widehat{\Upsilon}_{m,jj}\sim \text{InvGamma}(a,\widehat{\beta}_j)$ is a posterior sample of the asymptotic variance of $\widehat{\tau}^j_{\mathcal{O}}$, and  $\widehat{\beta}_j\sim \text{Gamma}(a_0+r_m^j\alpha,\lambda_0+\mathbf{1}_{r_m^j>0}\widehat{\Upsilon}^{-1}_{m,jj})$ is a posterior sample of the hyperparameter. Then,
    
\begin{align*}
   R_{m+1}(k) &= \sum_{j=1}^Jw_{jj}\widehat{\Upsilon}_{m+1,jj}(k) + \tr\left(\bH_m^\T\bD\bH_m\widehat{\bGamma}_{m}\right) \\
   &- \frac{1}{c}\left[\tr \left\{(\mathbf{I}-\bH_m)^\T\bD\bH_m\widehat{\bGamma}_{m}\right\}-\sum_{j=1}^Jw_{jj}\widehat{\Upsilon}_{m+1,jj}(k)\right]^2.
\end{align*}
Recall that $\widehat{\Gamma}_m$ is the plug-in estimate of the asymptotic variance of observational ATE $\widehat{\btau}_{\mathcal{O},m}$ so that 
 $\widehat{\Gamma}_{m,ij}\overset{p}{\to}0$ for all $1\leq i\leq j\leq J$. Further, since $\widehat{\Upsilon}_{m+1,jj}$ is bounded with probability 1, we can simplify $ R_{m+1}(k)$ as
\[
 R_{m+1}(k) =\left\{\sum_{j=1}^Jw_{jj}\widehat{\Upsilon}_{m+1,jj}(k)\right\}\left\{1-\frac{1}{c} \sum_{j=1}^Jw_{jj}\widehat{\Upsilon}_{m+1,jj}(k)\right\}+ o_p(1).
\]
Thus, minimizing $R_{m+1}(k)$ is asymptotically equivalent to minimizing $\sum_{j=1}^J w_{jj}\widehat{\Upsilon}_{m+1,jj}(k)$ and consequently, for large enough $n$, we have
\begin{align*}
   &P\left(E_m^c\left| \bigcap_{t=n}^{m-1} E_t^c\right.\right) \\
   &= P\left(\left\{j\notin \arg\min_k \left\{\sum_{j=1}^J w_{jj}\widehat{\Upsilon}_{m,jj}(k)\right\}\right\}\left| \bigcap_{t=n}^{m-1}\left\{j\notin \arg\min_k \left\{\sum_{j=1}^J w_{jj}\widehat{\Upsilon}_{t,jj}(k)\right\}  \right.\right\}\right). 
\end{align*}

Since $\widehat{\Upsilon}_{m,jj}(k) = (\overline{L}_{m-1}^j+\mathbf{1}_{k=j}L_{m})^{-1}\widehat{\Upsilon}_{jj}$ , we have
\begin{align*}
    \arg\min_k \left\{\sum_{j=1}^J w_{jj}\widehat{\Upsilon}_{m,jj}(k)\right\} & =   \arg\min_k \left\{\sum_{j=1}^J \frac{w_{jj}\widehat{\Upsilon}_{jj}}{\overline{L}_{m-1}^j+\mathbf{1}_{k=j}L_{m}}\right\} \\
    &=\arg\max_k  \left\{\sum_{j=1}^J \frac{w_{jj}\widehat{\Upsilon}_{jj}}{\overline{L}_{m-1}^j}- \sum_{j=1}^J \frac{w_{jj}\widehat{\Upsilon}_{jj}}{\overline{L}_{m-1}^j+\mathbf{1}_{k=j}L_{m}}\right\}\\
    &= \arg\max_k\left\{w_{kk} \widehat{\Upsilon}_{kk}\left(\frac{1}{\overline{L}^k_{m-1}}-\frac{1}{\overline{L}^k_{m-1}+L_m}\right)\right\}.
\end{align*}

Define $\Delta(m,k):=1/\overline{L}_{m-1}^k - 1/(\overline{L}_{m-1}^{k}+L_m)$. Conditional on the events $\cap_{t=n}^{m-1}E_t^c$ and using the fact that $f(x) = 1/x-1/(x+t)$ is a decreasing function of $x$, we have $\Delta(t,j)=\Delta(n,j)$  for all $n\leq t\leq m-1$ and  $\Delta(m,j')\leq\Delta(m-1,j')\cdots \leq \Delta(n,j')$ for $j'\neq j$. Recall that $w_{jj}>0$ for all $j$ and $\widehat{\Upsilon}_{jj}$ follows a Gamma distribution with mean
\[
\E_{\widehat{\Upsilon}^{-1}_{m,jj}}\frac{a+r_m^j\alpha}{(a_0-1)\left(\lambda_0+\mathbf{1}_{r_m^j>0}\widehat{\Upsilon}^{-1}_{m,jj}\right)}.
\]
We therefore show
\begin{align*}
   P\left(E_m^c\left| \bigcap_{t=n}^{m-1} E_t^c\right.\right)&=P\left(\bigcap_{j'\neq j}\left\{w_{jj}\widehat{\Upsilon}_{jj}\Delta(n,j) < w_{j'j'}\widehat{\Upsilon}_{j'j'}\Delta(m,j')\right\}\right)\\
&\leq P\left(\bigcap_{j'\neq j}\left\{w_{jj}\widehat{\Upsilon}_{jj}\Delta(n,j) < w_{j'j'}\widehat{\Upsilon}_{j'j'}\Delta(m-1,j')\right\}\right)\\
&= P\left(E_{m-1}^c\left| \bigcap_{t=n}^{m-2} E_t^c\right.\right).
\end{align*}

\end{proof}

We will now prove that $E_m$ occurs infinitely often, i.e. $P(\cap_{n=1}^\infty\cup_{m=n}^\infty E_m)=1$. First, note that
\begin{align*}
P\left(\bigcap_{n=1}^\infty\bigcup_{m=n}^\infty E_m\right)=1&\Leftrightarrow P\left\{\left(\bigcap_{n=1}^\infty\bigcup_{m=n}^\infty E_m\right)^c\right\}=0\\
&\Leftrightarrow  P\left(\bigcup_{n=1}^\infty\bigcap_{m=n}^\infty E_m^c\right)=0\\
&\Leftrightarrow \lim_{n\to\infty} P\left(\bigcap_{m=n}^\infty E_m^c\right)=0.
\end{align*}
Since for all $n$ and $N_0>n$, we have
\begin{align*}
    P\left(\bigcap_{m=n}^{N_0} E_m^c\right) &= \left\{\prod_{m=n+1}^{N_0} P\left(E_m^c\left| \bigcap_{t=n}^{m-1} E_t^c\right.\right)\right\} P(E_n^c)\leq P(E_n^c)^{N_0-n+1},
\end{align*}
where the inequality holds because $P\left(E_m^c\left| \bigcap_{t=n}^{m-1} E_t^c\right.\right)\leq P\left(E_{m-1}^c\left| \bigcap_{t=n}^{m-2} E_t^c\right.\right)$ for all $m>n$. Thus, for all $n\in\mathbb{N}$, 
\[
P\left(\bigcap_{m=n}^\infty E_m^c\right) = \lim_{N_0\to\infty} P\left(\bigcap_{m=n}^{N_0} E_m^c\right)=0.
\]
The proof of Lemma 3.3 (i) is therefore complete. The second part of Lemma 3.3 is a straightforward result of law of large numbers and the consistent assumption of the debiased model.

\section{Consistency and asymptotic normality}
\subsection*{Marginal logistic regression model for propensity scores}
Let $\psi(\bx) \in \mathbb{R}^{d}$ be a fixed feature vector, and   
define $\mathrm{expit}(u) = \exp(u)/\left\lbrace 1 + \exp(u) \right\rbrace$.
For any and $\bbeta \in \mathbb{R}^d$ define
$\eta(\bx; \bbeta) \triangleq \mathrm{expit}\left\lbrace \psi(\bx)^{\T}\bbeta\right\rbrace$.
We 
consider a working marginal logistic regression working model for the 
propensity scores in the observational data of the form 
\begin{eqnarray*}
    P(\bA|\bX; \overline{\bbeta}^J) &=& \prod_{j=1}^JP(A^j|\bX; \bbeta^j) = \sum_{j=1}^J \ell(\bX, A^j; \bbeta^j)
\end{eqnarray*}
where 
\begin{equation*}
    \ell(A^j, \bX; \bbeta^j) = \eta(\bX;\bbeta^j)^{A^j}
    \left\lbrace 1 - \eta(\bX; \bbeta^j) \right\rbrace^{1-A^j}
\end{equation*}
is the usual logistic regression model, 
and $\overline{\bbeta}^J = (\bbeta^{1\T}, \ldots, \bbeta^{J\T})^{\T}$ is a vector of unknown
coefficients.    Note that we do not assume this model is correctly specified.  

Let $\ell(\bX, \bA; \overline{\bbeta}^J) = \prod_{j=1}^J\ell(\bX, A^j; \bbeta^j)$.
Given a sample $\left\lbrace (\bX_i, \bA_i, Y_i)\right\rbrace_{i=1}^n$ comprising
$n$ independent copies of $(\bX, \bA, Y)$, we construct and estimator 
$\widehat{\overline{\bbeta}}_n = (\widehat{\bbeta}_n^{1\T},\ldots, \widehat{\bbeta}_n^{J\T})^{\T}$ 
using maximum likelihood, i.e., 
$\widehat{\overline{\bbeta}}_n$ solves 
\begin{equation*}
    \pn \nabla \log\,\ell(\bX, \bA;\overline{\bbeta}^J) = 0,
\end{equation*}
where $\pn$ is the empirical measure.  Define $\overline{\bbeta}^*$ to be the solution  to 
\begin{equation}\label{popnLogisticRegression}
P \nabla\ell(\bX, \bA; \bbeta) = 0.
\end{equation}
We assume (B0) that a solution to (\ref{popnLogisticRegression}) exists and is unique (this assumption is mild as the loss function is strictly convex).   
In addition, we assume 
\begin{itemize}
    \item[(B1)] $P||\nabla\ell(\bX, \bA;\overline{\bbeta})||^2 < \infty$ for all $\overline{\bbeta}$ in a neighborhood
    of $\overline{\bbeta}^*$; 
    \item[(B2)] $\mathbf{H} \triangleq P\nabla^2\ell(\bX, \bA; \overline{\bbeta}^*)$ exists and is strictly positive definite.
\end{itemize}
These assumptions are quite weak.  It follows from standard theory for $M$-estimation 
\citep[e.g., see][]{niemiro1992asymptotics} that 
\begin{eqnarray*}
    \rtn(\widehat{\overline{\bbeta}}_n - \overline{\bbeta}^*) &=& -\rtn(\pn-P)\bH^{-1}\nabla\ell(\bX, \bA; \overline{\bbeta}^*) + o_P(1)\\[3pt]
    &\overset{d}{\to} & \mathcal{N}(0, \bH^{-1}\bGamma \bH^{-1}),
\end{eqnarray*}
where $\bGamma = P\nabla\ell(\bX, \bA; \overline{\bbeta}^*) 
\nabla\ell^{\T}(\bX, \bA;\overline{\bbeta}^*)$.

\subsection*{Asymptotic distribution of $\widehat{\btau}_{\mathcal{O},m}$}
Let $\mathbb{P}_n$ be the empirical distribution on $\{(\bX_i,\bA_i,Y_i)\}_{i=1}^n$. The IPWE estimator of the $j$th treatment effect is

\[
\widehat{\tau}_{\mathcal{O},n}^j \triangleq \frac{\mathbb{P}_n A^j Y / \eta\left(\mathbf{X} ; \widehat{{\boldsymbol{\beta}}}_n^j\right)}{\mathbb{P}_n A^j / \eta\left(\mathbf{X} ; \widehat{\bar{\beta}}_n^j\right)}-\frac{\mathbb{P}_n\left(1-A^j\right) Y /\left\{1-\eta\left(\mathbf{X} ; \widehat{{\boldsymbol{\beta}}}_n^j\right)\right\}}{\mathbb{P}_n\left(1-A^j\right) /\left\{1-\eta\left(\mathbf{X} ; \widehat{{\boldsymbol{\beta}}}_n^j\right)\right\}}
\]
Define the population analog of  $\widehat{\tau}_{\Theta, n}^j$ as 
\[
\tau_{\mathcal{O}}^{j*} \triangleq \frac{P A^j Y / \eta\left(\mathbf{X} ; {\boldsymbol{\beta}}^{j*}\right)}{P A^j / \eta\left(\mathbf{X} ; {\boldsymbol{\beta}}^{j*}\right)}-\frac{P\left(1-A^j\right) Y /\left\{1-\eta\left(\mathbf{X} ; {\boldsymbol{\beta}}^{j*}\right)\right\}}{P\left(1-A^j\right) /\left\{1-\eta\left(\mathbf{X} ; {\boldsymbol{\beta}}^{j*}\right)\right\}}
\]

Define the first and second terms in $\widehat{\tau}^j_{\mathcal{O},n}$ as $\widehat{\tau}^j_{\mathcal{O},n}(1)$ and $\widehat{\tau}^j_{\mathcal{O},n}(0)$, respectively. Similarly, we can define $\tau_{\mathcal{O}}^{j*}(1)$ and $\tau_{\mathcal{O}}^{j*}(0)$ as the corresponding terms in $\tau_{\mathcal{O}}^{j*}.$ The estimators $(\widehat{\tau}^j_{\mathcal{O},n}(1),\widehat{\tau}^j_{\mathcal{O},n}(0),\widehat{\balpha}_1^j,\widehat{\balpha}_0^j,\widehat{{\bbeta}}_n^j)$ can be derived by solving $(\mu_1^j,\mu_0^j,\balpha_1^j,\balpha_0^j,\bbeta^j)$ for


\begin{equation}\label{eq:est}
    \left\{
\begin{array}{lc}
  \mathbb{P}_n A^j Y / \eta\left(\mathbf{X} ; {{\boldsymbol{\beta}}}^j\right) -\mu_1^j\mathbb{P}_n A^j / \eta\left(\mathbf{X} ; {\overline{\beta}}^j\right)  &=0  \\
  \mathbb{P}_n\left(1-A^j\right) Y /\left\{1-\eta\left(\mathbf{X} ; {{\boldsymbol{\beta}}}^j\right)\right\}-\mu_0^j \mathbb{P}_n\left(1-A^j\right) /\left\{1-\eta\left(\mathbf{X} ; {{\boldsymbol{\beta}}}^j\right)\right\}  & =0\\
\mathbb{P}_n \left\{A^j-\eta\left(\bX;\bbeta^j\right)\right\}\bpsi(\bX)&=0,
\end{array}\right.
\end{equation}
where the last estimation equation results from the score function of the propensity score likelihood. Define $\boldsymbol{\gamma}:=\left(\mu_1^j, \mu_0^j, \boldsymbol{\beta}^j\right)$. Let $U_1\left\{\mathbf{A}^j,Y, \mathbf{X} ; \boldsymbol{\gamma}\right\}, U_2\left\{\mathbf{A}^j,Y, \mathbf{X} ; \boldsymbol{\gamma}\right\}$, and $\bU_3\left\{\mathbf{A}^j,Y, \mathbf{X} ; \boldsymbol{\gamma}\right\}$ be the corresponding estimating equations in~\eqref{eq:est}. We will use these estimating equations to derive asymptotic distributions of $\widehat{\gamma}$ in the following sections.

We first show that $\widehat{\tau}^j_{\mathcal{O},n}(1)$ converges in probability to $\tau^{j*}_{\mathcal{O}}(1)$.
 Assume (B3) $\psi(\cdot)<\infty$ and (B4) $\eta(\bX;\bbeta^j)\in(\epsilon,1-\epsilon)$ for $\bbeta_j$ in a neighborhood of $\bbeta^{j*}$ and some $\epsilon>0$.
 
 Note that
\[
\mathbb{P}_n \frac{A^j}{\eta(\bX;\widehat{\overline{\bbeta}})} -\mathbb{P} \frac{A^j}{\eta(\bX;{\bbeta}^{j*})} = \mathbb{P}_n \frac{A^j}{\eta(\bX;\widehat{\overline{\bbeta}})}-\mathbb{P}_n \frac{A^j}{\eta(\bX;{\overline{\bbeta}}^{j*})}+\mathbb{P}_n \frac{A^j}{\eta(\bX;{\overline{\bbeta}}^{j*})} -\mathbb{P} \frac{A^j}{\eta(\bX;{\bbeta}^{j*})}.
\]
Since the class of maps $(\bx,\ba)\to \ba/\eta(\bx;\bbeta^j)$ for $\bbeta^j$ in a neighborhood of $\bbeta^{j*}$ is Glivenko-Cantelli, we have
$
\|\mathbb{P}_n {A^j}/{\eta(\bX;{\overline{\bbeta}}^{j*})} -\mathbb{P} {A^j}/{\eta(\bX;{\bbeta}^{j*})}\|_\infty \to 0 
$
almost surely. In addition, we have
\begin{align*}
\frac{1}{\eta\left(\mathbf{X} ; \widehat{\boldsymbol{\beta}}_n^j\right)}-\frac{1}{\eta\left(\mathbf{X} ; \boldsymbol{\beta}^{j*}\right)} & =-\frac{1}{\eta\left(\mathbf{X} ; \widehat{\boldsymbol{\beta}}_n^j\right) \eta\left(\mathbf{X} ; \boldsymbol{\beta}^{j*}\right)}\left\{\eta\left(\mathbf{X} ; \widehat{\boldsymbol{\beta}}_n^j\right)-\eta\left(\mathbf{X} ; \boldsymbol{\beta}^{j*}\right)\right\} \\
& =-\frac{\nabla \eta\left(\mathbf{X} \mid \boldsymbol{\beta}^{j*}\right)^{\T}\left(\widehat{\boldsymbol{\beta}}_n^j-\boldsymbol{\beta}^{j*}\right)}{\eta\left(\mathbf{X} ; \widehat{\boldsymbol{\beta}}_n^j\right) \eta\left(\mathbf{X} ; \boldsymbol{\beta}^{j*}\right)}+o_P(1 / \sqrt{n}) \\
& =-\frac{\left\{1-\eta\left(\mathbf{X} ; {\boldsymbol{\beta}}^{j*}\right)\right\} \psi(\mathbf{X})^{\T}\left(\widehat{\boldsymbol{\beta}}_n^j-\boldsymbol{\beta}^{j*}\right)}{\eta\left(\mathbf{X} ; \widehat{\boldsymbol{\beta}}_n^j\right)}+o_p(1 / \sqrt{n})
\end{align*}
By assumptions (B4) and (B5) and the consistency of $\widehat{\bbeta}_n^j$, we show that $\mathbb{P}_n {A^j}/{\eta(\bX;\widehat{\overline{\bbeta}})}-{A^j}/{\eta(\bX;{\overline{\bbeta}}^{j*})}$ converges in probability to 0. Following similar arguments, we can show that the nominator $\mathbb{P}_n {A^jY}/{\eta(\bX;\widehat{\overline{\bbeta}})}$ converges in probability to $\mathbb{P} {A^jY}/{\eta(\bX;{\overline{\bbeta}}_0)}$. Thus, by Slustky's theorem and the positivity assumption, we have $\widehat{\tau}^j_{\mathcal{O},n}(1)\overset{p}{\to}\tau^{j*}_{\mathcal{O}}(1).$ Similarly, $\widehat{\tau}^j_{\mathcal{O},n}(0)\overset{p}{\to}\tau^{j*}_{\mathcal{O}}(0).$ Applying Slutsky's theorem again, we thus prove that $\widehat{\tau}^j_{\mathcal{O},n}\overset{p}{\to}\tau^{j*}_{\mathcal{O}}.$

To show the asymptotic noramlity of the proposed estimator, we will start with the asymptotic normality of $\widehat{\tau}^j_{\mathcal{O},n}$.
Define $U^j_{i}:=U_i\{\bA^j,Y,\bX;\bgamma\}$ for $i=1,2,3$ and $\bU^j:=(U^j_{1},U^j_{2},\bU_{3}^{j\T})^\T$. Let $\bU^j_{i0}$ and $\bU_0^j$ be the values of $\bU^j_i$ and $\bU^j$ at $\bgamma^*$, respectively. In addition to assumptions (B1)-(B3) for the asymptotics of $\widehat{\bbeta}^j_n$, we assume

\begin{itemize}
\item[(C1)] $\mathbb{P} (U^j_1,U^j_2,\bU_3^{j\T})^\T$ has a unique solution at $\bgamma^*=(\tau^{j*}_{\mathcal{O}}(1),\tau^{j*}_{\mathcal{O}}(0),{\bbeta}^{j*})$.
    \item[(C2)] $\mathbb{P} \|\bU^j_{0}\|^2<\infty$ for all $j$.
    \item[(C3)] $\bU^j$ is differentiable at $\bgamma^*$ and $\mathbb{P} \left.\partial \bU^j/\partial \bgamma\right|_{\bgamma^*}$ is strictly positive definite for all $j$.
\end{itemize}

By Taylor expansion, we have

$$
\sqrt{n}\left(\begin{array}{c}
\widehat{\tau}_{\mathcal{O}, n}^j(1)-\tau_{\mathcal{O}, 0}^j(1) \\
\widehat{\tau}_{\mathcal{O}, n}^j(0)-\tau_{\mathcal{O}, 0}^j(0) \\
\widehat{\boldsymbol{\beta}}^j_n-\boldsymbol{\beta}_0^j
\end{array}\right)=\left(\begin{array}{ccc}
s_1^j & 0 & \mathbf{b}_1^{j \top} \\
0 & s_2^j & \mathbf{b}_2^{j \top} \\
0 & 0 & \mathbf{H}^j
\end{array}\right)^{-1}\left(\begin{array}{c}
\sqrt{n} U_{10}^j \\
\sqrt{n} U_{20}^j \\
\sqrt{n} U_{30}^j
\end{array}\right)+o_p(1)
$$

where

$$
\begin{aligned}
& s_1^j=\mathbb{P} \partial U_1^j /\left.\partial \mu_1^j\right|_{\boldsymbol{\gamma}_0}=-\mathbb{P}_{\mathbf{X}}\left\{\frac{P\left(A^j=1 \mid \mathbf{X}\right)}{\eta\left(\mathbf{X} ; \boldsymbol{\beta}_0^j\right)}\right\} \\
& s_2^j=\mathbb{P} \partial U_2^j /\left.\partial \mu_0^j\right|_{\boldsymbol{\gamma}_0}=-\mathbb{P}_{\mathbf{X}}\left\{\frac{P\left(A^j=0 \mid \mathbf{X}\right)}{1-\eta\left(\mathbf{X} ; \boldsymbol{\beta}_0^j\right)}\right\} \\
& \mathbf{b}_1^j=\mathbb{P} \partial U_1^j /\left.\partial \boldsymbol{\beta}^j\right|_{\gamma_0}=-\mathbb{P} A^j\left\{Y-\tau_{\mathcal{O}, 0}^j(1)\right\} \exp \left\{-\psi(\mathbf{X})^{\top} \boldsymbol{\beta}_0^j\right\} \psi(\mathbf{X}) \\
& \mathbf{b}_2^j=\mathbb{P} \partial U_2^j /\left.\partial \boldsymbol{\beta}^j\right|_{\boldsymbol{\gamma}_0}=\mathbb{P}\left(1-A^j\right)\left\{Y-\tau_{\mathcal{O}, 0}^j(0)\right\} \exp \left\{\psi(\mathbf{X})^{\top} \boldsymbol{\beta}_0^j\right\} \psi(\mathbf{X}) \\
& \mathbf{H}^j=\mathbb{P} \partial U_3^j /\left.\partial \boldsymbol{\beta}^j\right|_{\boldsymbol{\gamma}_0}=-\mathbb{P}_{\mathbf{X}}\left[\left\{1-\eta\left(\mathbf{X} ; \boldsymbol{\beta}_0^j\right)\right\} \eta\left(\mathbf{X} ; \boldsymbol{\beta}_0^j\right) \psi(\mathbf{X}) \psi(\mathbf{X})^{\top}\right]
\end{aligned}
$$

Here $\mathbf{H}^j$ is the $j$-th block on the diagonal of $\mathbf{H}$, which was defined in the asymptotics of $\widehat{\overline{\boldsymbol{\beta}}}_n=\left(\widehat{\boldsymbol{\beta}}_n^{1 \top}, \cdots, \widehat{\boldsymbol{\beta}}_n^{J \top}\right)^{\top}$. Note that when the propensity score model is correctly specified, we have $s_1^j=s_2^j=1$. By block matrix inverse formula, we can show that

$$
\left(\begin{array}{ccc}
s_1^j & 0 & \mathbf{b}_1^{j \top} \\
0 & s_2^j & \mathbf{b}_2^{j \top} \\
0 & 0 & \mathbf{H}^j
\end{array}\right)^{-1}=\left(\begin{array}{ccc}
s_1^{-j} & 0 & -s_1^{-j} \mathbf{b}_1^{j \top} \mathbf{H}^{-j} \\
0 & s_2^{-j} & -s_2^{-1} \mathbf{b}_2^{j \top} \mathbf{H}^{-j} \\
0 & 0 & \mathbf{H}^{-j}
\end{array}\right)
$$

where $a^{-j}$ denotes as $\left(a^j\right)^{-1}$. Therefore,

\begin{equation*}  \sqrt{n}\left(\widehat{\tau}^j_{\mathcal{O},n}-\tau^{j*}_{\mathcal{O}}\right) \overset{d}{\to}\mathcal{N}\left\{0,(\mathbf{e}_1-\mathbf{e}_2)^\top\E\left( \left.\frac{\partial\bU^j}{\partial\bgamma}\right|_{\bgamma_0} \right)^{-1}\E\bU^j_0\bU_0^{j\T}\E\left( \left.\frac{\partial\bU^j}{\partial\bgamma}\right|_{\bgamma_0} \right)^{-\T}(\mathbf{e}_1-\mathbf{e}_2)\right\},
\end{equation*}
where $\mathbf{e}_j$ is the standard basis vector with $j$-th entry being 1 and 0 otherwise. Specifically, when both propensity score and outcome models are correctly specified, the asymptotic variance of $\widehat{\tau}^j_{\mathcal{O},n}$ reduces to 
\begin{equation*}
    \E U_{10}^jU_{10}^{j\T} + \E U_{20}^jU_{20}^{j\T}.
\end{equation*}
To derive asymptotic distribution of $\sqrt{n}(\widehat{\btau}_{\mathcal{O},n}-\btau_{\mathcal{O}}^*)$, we assume that assumptions $(C1)-(C3)$ hold for $j=1,\cdots,J$. Following the above approach, we can derive $(\widehat{\tau}_{\mathcal{O},n}^1,\cdots,\widehat{\tau}_{\mathcal{O},n}^J)$ by simultaneously solving estimating equations $(\bU_1^\T, \cdots, \bU_J^\T)^\T$, where $\bU^j = (U_1^j, U_2^j, \bU_3^{j\T})$. Thus, the asymptotic distribution of $\widehat{\tau}_{\mathcal{O},n}$ is:
\[
\sqrt{n}(\widehat{\btau}_{\mathcal{O},n}-\btau_{\mathcal{O}}^*)\overset{d}{\to} \mathcal{N}\left(\mathbf{0},\bGamma\right),
\]
where
\[
\bGamma=\E\left( \left.\frac{\partial\bU}{\partial\bgamma}\right|_{\bgamma_0} \right)^{-1}\E\bU_0\bU_0^{\T}\E\left( \left.\frac{\partial\bU}{\partial\bgamma}\right|_{\bgamma_0} \right)^{-\T}\mathbf{E}^\T,
\]
and
\[
\mathbf{E} = \begin{pmatrix}
    \mathbf{e}_1^\top-\mathbf{e}_2^\top & \mathbf{0}^\T &\cdots  &\cdots & \mathbf{0}^\T\\
     \mathbf{0}^\T & \mathbf{e}_1^\top-\mathbf{e}_2^\top  &  \cdots& \cdots &  \mathbf{0}^\T\\
     \vdots& \vdots &\ddots &\ddots &\vdots\\
     \mathbf{0}^\T &\cdots & \cdots & \mathbf{0}^\T & \mathbf{e}_1^\top-\mathbf{e}_2^\top
\end{pmatrix}.
\]

\end{appendices}

\end{document}